\definecolor{darkred}  {rgb}{0.5,0,0}
\definecolor{darkblue} {rgb}{0,0,0.5}
\definecolor{darkgreen}{rgb}{0,0.5,0}
\newcommand{\old}{\color{black}} 
 \newtheorem{Def}{Definition}
 \newtheorem{Thm}{Theorem}
 \newtheorem{Lem}{Lemma}
\begin{document}
\title{Entropy, gap and a multi-parameter deformation of the Fredkin spin chain}
\author{Zhao Zhang and Israel Klich\\ Department of Physics, University of Virginia, Charlottesville, 22904, VA}

\begin{abstract}
We introduce a multi-parameter deformation of the Fredkin spin $1/2$ chain whose ground state is a weighted superposition of Dyck paths, depending on a set of parameters $t_i$ along the chain. The parameters are introduced in such a way to maintain the system frustration-free while allowing to explore a range of possible phases. In the case where the parameters are uniform, and a color degree of freedom is added we establish a phase diagram with a transition between an area law and a volume low. The volume entropy obtained for half a chain is $n \log s$ where $n$ is the half-chain length and  $s$ is the number of colors. Next, we prove an upper bound on the spectral gap of the $t>1, s>1$ phase, scaling as $\Delta=O((4s)^nt^{-n^2/2})$, similar to a recent a result about the deformed Motzkin model, albeit derived in a different way. Finally, using an additional variational argument we prove an exponential lower bound on the gap of the model for $t>1, s=1$, which provides an example of a system with bounded entanglement entropy and a vanishing spectral gap. \old
\end{abstract}

\maketitle    

\section{Introduction}

Entanglement is one of the central quantum phenomena that distinguish quantum systems from their classical counterparts. It has profound implications in many different contexts of modern physics and has both driven theoretical discussions of the foundations of quantum mechanics and motivated promising applications in quantum information and quantum computation. In quantum many-body physics, the study of entanglement is focused on the survival, or lack thereof, of entanglement between individual particles when a large number of particles organize themselves with interactions in a condensed matter system \cite{laflorencie2016quantum}. Some of the characteristics of the correlations generated this way are quantified by the concept of `entanglement entropy' and it's scaling with the system size. 

Scaling of entanglement entropy is often closely related to the spectral gap of a system, although an exact relation is still somewhat elusive, especially in higher dimensions. In a gapped system, correlations are short-ranged, and entanglement entropy is expected to obey an ``area law'', which says that the entanglement entropy of a region scales with the area of the boundary of the region as opposed to its volume. In one dimension the entanglement entropy of a subsystem is bounded by a constant regardless of its length, as shown in  \cite{hastings2007area, arad2013area}. For gapless systems, (1+1)-dimensional conformal field theory studies find logarithmic scaling \cite{holzhey1994high,calabrese2009entanglement, Jin2004}. Fermi liquids exhibit a logarithmic violation of area law behavior scaling in any dimension \cite{wolf2006violation,gioev2006entanglement}, and require control of multidimensional dimensional gineralizations of the Szego limit theorems as described by Widom's conjecture \cite{gioev2006entanglement,helling2010special,leschke2014scaling}.

Recently, efforts have been made to find one-dimensional spin systems with more severe violations of the area law \cite{irani2010ground,gottesman2010entanglement,vitagliano2010volume,ramirez2014conformal}. These exotic scalings are achieved at the price of either introducing a high-dimensional local Hilbert space or sacrificing translational invariance. Based on previous work by Bravyi et al \cite{bravyi2012criticality}, Movassagh and Shor \cite{movassagh2016supercritical} first introduced a highly-entangled (power-law scaling) integer spin-$s$ chain ($s>1$) that enjoys several physically appealing features, including relatively small dimensionality of the local Hilbert space, short-range interaction and translational invariance. Furthermore, the model Hamiltonian is frustration free: it can be written as a sum of local projectors sharing a unique simultaneous ground state. Inspired by this work, in \cite{zhang2016quantum} we have used the idea to describe a class of Hamiltonians with a tunable parameter $t$ that exhibits a novel quantum phase transition. 

The $t>1$ phase, in particular, features an enhanced scaling of entanglement entropy to a full extensive (volume) scaling. 
The phase exhibits additional unusual properties. For example, the spectral gap has been shown to decrease exceptionally fast as a function of system size $L=2n$. An initial estimate of exponential decay has been improved by Levine and Movassagh \cite{levine2016} to an unusually fast decay as $\exp(-n^2/3)$.
Recently, numerical studies on the nature of the gap in $t<1$ phase in the deformed Motzkin model have been carried out suggesting a type of topological order in the Haldane phase  \cite{barbiero2017haldane}.

Another realization of the new bound to extensive phase transition is based on the Fredkin model introduced by Salberger and Korepin \cite{salberger2016fredkin}. The Fredkin model utilizes half-integer spin chains, with next nearest neighbor interactions based on the so-called Fredkin gate \cite{dell2016violation, salberger2016fredkin}. The deformed Fredkin model is presented in \cite{salberger2016deformed} and exhibits a similar quantum phase transition into an extensively entangled state.
In particular,  a proof of the entropy scaling is given, utilizing known property of Dyck paths and their relation to Young Tableaux enumeration. 

In this paper we expand the results of \cite{salberger2016deformed} by addressing several important issues. Our main results are the following:

1) The underlying conditions for the validity of the frustration free deformation in \cite{salberger2016deformed} are clarified. In doing so we uncover a large class of deformations, that maintain a frustration free ground state.

2) We give an alternative, elementary and fully self-contained version of the proof of entropy scaling that might yield further insights of the underlying physics.
 
3) We prove an upper bound on the scaling of the spectral gap using a variational wave function that explicitly manifests the super-exponential decay of the gap giving a physical intuition behind the scaling established in \cite{levine2016} for the deformed Motzkin model.

4) Recent numerical evidence \cite{barbiero2017haldane} suggests that, surprisingly, the $t>1$ phase in the colorless Fredkin model is gapless. This behavior is somewhat unusual for this type of transition as the entanglement obeys an area law for $s=1$ and $t>1$. 
We explain this behavior by establishing a variational excited state that has vanishing energy $t>1, s=1$ in the thermodynamic limit. \old

The paper is structured as follows. In Section \ref{HGS}, we briefly review the constituting ingredients Dyck walk and Fredkin interaction of the Hamiltonian and ground state of Fredkin spin chain and show how the deformation parameter $t$ fit in without compromising its frustration free feature of the Hamiltonian and the uniqueness of its ground state. In Section \ref{EntEnt}, we illuminate the mechanism behind the linear scaling of entanglement entropy of Fredkin spin chain by introducing the non-commutative `height' and `shift' operators, without resorting to any mathematical knowledge beyond analysis. In Section \ref{SpGp}, we exploit features of the Hamiltonian and its ground state to construct low energy excitation states for the $s>1$ and $s=1$ models that have exponentially small gaps in the $t>1$ region of the phase diagram. Finally, in Section \ref{Discussion}, we point out several possible future directions.

\section{Hamiltonian and ground state}\label{HGS}

We start with a brief review of Dyck walks, which span the ground state subspace of the Hilbert space of the Fredkin spin chain Hamiltonian introduced in \cite{dell2016violation,salberger2016fredkin}. 
\begin{Def} A Dyck walk (or path) on $2n$ steps is any path from $(0, 0)$ to $(0, 2n)$ with steps $(1, 1)$ and $(1, -1)$ that never passes below the $x$-axis. \end{Def}
A Dyck walk can be mapped to a spin configuration of a spin-$1/2$ chain $|\sigma_1 \sigma_2 \dots \sigma_{2n}\rangle$ with $\sigma_k = +1/2$ or $-1/2$ for a $(1, 1)$ or $(1, -1)\ k$th step respectively. When each step is assigned a color $c_k$ from a palette of $s$ colors, $|\uparrow^c_k\rangle$ ($|\downarrow^c_k\rangle$) corresponds to a state $\sigma_k = c_k/2$ (resp. $\sigma_k = -c_k/2$). And a colored Dyck walk can, therefore, represent the spin configuration of any half-integer spin chain. In the original Fredkin spin chain model, the ground state is a uniform superposition of colored Dyck walks. Here following \cite{zhang2016quantum}, we present a class of Hamiltonians that feature higher probability amplitudes in the superposition for paths with greater heights. Our Hamiltonians have a reduced contribution to entanglement from fluctuation in path shape, which are greatly reduced, however we have an enhanced contribution to entanglement from color degrees of freedom, which, in turn, is responsible for the volume scaling of entanglement.

We introduce a parameter $t$ that deforms the Fredkin Hamiltonian of \cite{dell2016violation,salberger2016fredkin} while remaining frustration free. The Hamiltonian is given by:
\begin{equation} H(s,t) = H_F(s,t) + H_X(s) + H_{\partial}(s), \label{Ham} 
\end{equation} 
where 
\begin{eqnarray}
H_{F}(s,t)=\sum_{j=2}^{2n-1}\sum_{c_1,c_2,c_3=1}^{s}\big(|\phi^{{c_{1},c_{2},c_{3}}}_{j,A}\rangle\langle\phi^{{c_{1},c_{2},c_{3}}}_{j,A}|+|\phi^{{c_{1},c_{2},c_{3}}}_{j,B}\rangle\langle\phi^{{c_{1},c_{2},c_{3}}}_{j,B}|\big) \label{HF}
\end{eqnarray}
\begin{eqnarray}
H_X(s) =& \sum_{j=1}^{2n-1}  [\sum_{c_1 \neq c_1} |\uparrow_j^{c_1}\downarrow_{j+1}^{c_2}\rangle \langle\uparrow_j^{c_1}\downarrow_{j+1}^{c_2}|\\ & + \frac{1}{2} \sum_{c_1,c2=1}^s (|\uparrow_j^{c_1}\downarrow_{j+1}^{c_1}\rangle - |\uparrow_j^{c_2}\downarrow_{j+1}^{c_2}\rangle)(\langle \uparrow_j^{c_1}\downarrow_{j+1}^{c_1}| - \langle\uparrow_j^{c_2}\downarrow_{j+1}^{c_2}|)],
\end{eqnarray}
and
\begin{eqnarray}
H_{\partial}(s) =& \sum_{c=1}^s (|\downarrow_1^c\rangle\langle\downarrow_1^c| + |\uparrow_{2n}^c\rangle\langle\uparrow_{2n}^c|).
\end{eqnarray}
The projectors in $H_F$ are defined using:
\begin{eqnarray}
|\phi^{{c_{1},c_{2},c_{3}}}_{j,A}\rangle=\frac{1}{\sqrt{1+|t_{A,j}|^2}}(\left|\uparrow_{j}^{c_1}\uparrow_{j+1}^{c_2}\downarrow_{j+2}^{c_3}\right\rangle-t_{A,j}\left|\uparrow_{j}^{c_2}\downarrow_{j+1}^{c_3}\uparrow_{j+2}^{c_1}\right\rangle)
\end{eqnarray}
\begin{eqnarray}
|\phi^{{c_{1},c_{2},c_{3}}}_{j,B}\rangle=\frac{1}{\sqrt{1+|t_{B,j}|^2}}(\left|\uparrow_{j}^{c_1}\downarrow_{j+1}^{c_2}\downarrow_{j+2}^{c_3}\right\rangle-t_{B,j}\left|\downarrow_{j}^{c_3}\uparrow_{j+1}^{c_1}\downarrow_{j+2}^{c_2}\right\rangle)\label{phiminus}\label{phiB}
\end{eqnarray}
with the condition that $t_j^B=t_{j-1}^A$.

The Fredkin gate projectors in $H_F$ allows a pair of $\uparrow\downarrow$ neighboring spins (with the same color enforced by the first term in $H_X$) to move freely around its left or right third neighbor and still appear in the ground state superposition, but now with a different probability amplitude. The second term in $H_X$ ensures that otherwise identical Dyck paths with different coloring have the same weight. And the boundary term $H_{\partial}$ (together with the Fredkin projectors) penalizes paths that go below $0$ at any point along the chain. Notice that analogous to \cite{zhang2016quantum}, the simplest choice is a parameter $t=t_A=t_B$ being the same in the two projectors of $H_F$ is the one employed in \cite{salberger2016deformed,udagawa2017finite}, but is only a subset of the parameter space that leaves the Hamiltonian frustration free. More generally, we introduce parameters $t^A$ and $t^B$, for the two projectors in $H_{F}$. Then any set of $\{t_j^A, t_j^B\}$ that satisfies the condition $t_j^A = t_{j+1}^B$ for all $j$'s would guarantee the Hamiltonian to be frustration free \footnote{Note that there is no parallel condition $t_j^B = t_{j+1}^A$.}. The point is illustrated in Fig. \ref{ffcondition}. In particular, we may specify a Hamiltonian with a frustration-free ground state by picking any set of the $t_j^A$ parameters.
\begin{figure} \centering \includegraphics[width=0.4\textwidth]{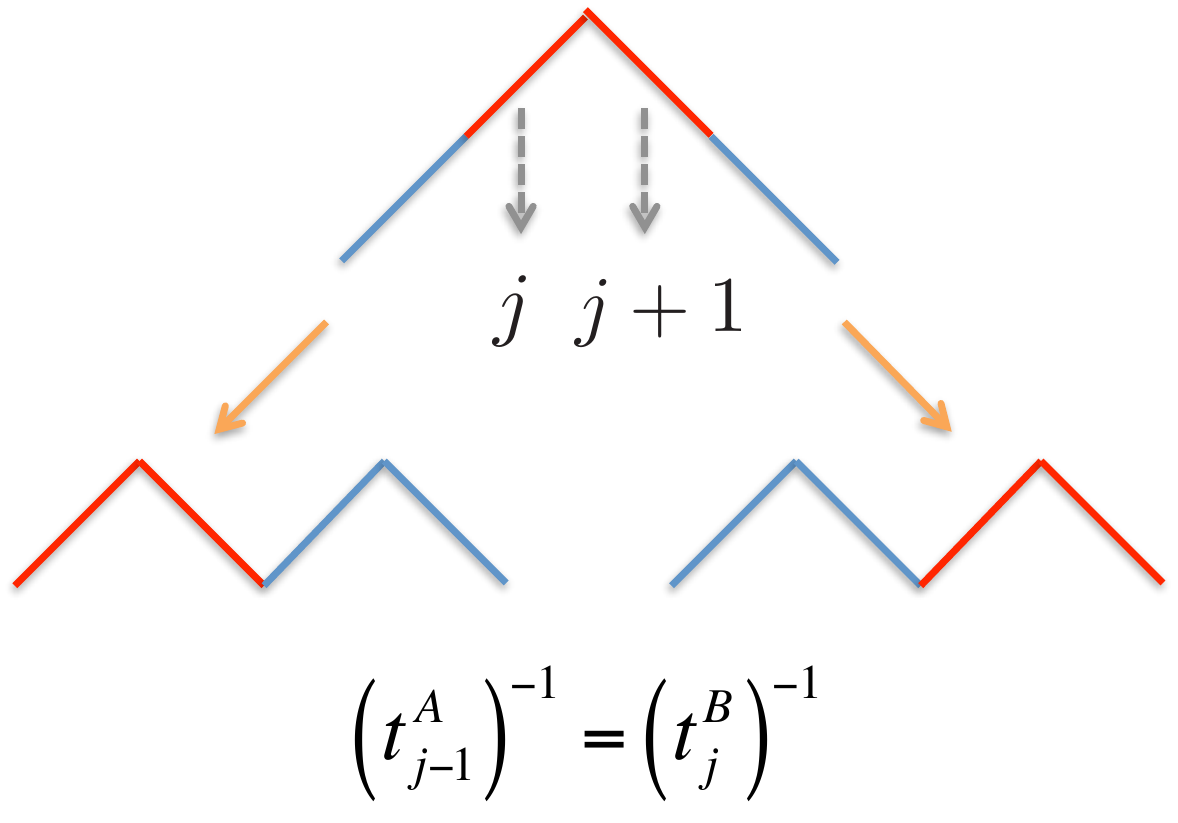} \caption{Different ways of ``flattening'' a hill must have the same amplitude. Note that the colors have been interchanged during the procedure, this, however, is not an obstruction as all colorings appear with the same amplitude.} \label{ffcondition} \end{figure}

We now want to characterize the ground state of the system. First, let us denote $h(l)$ to be the height of the Dyck path after step $l$, that is, for a spin configuration $|w\rangle$ describing a Dyck path,  \begin{eqnarray}
\sum_{c = 1}^s\sum_{j=1}^{l} \sigma_{j,c}^{z} |w\rangle=h(l)|w\rangle,
\end{eqnarray}
where $\sigma_{j,c}^z$ is the Pauli matrix giving $\pm 1$ if  spin $j$ is in state $\uparrow_j^c$ or $\downarrow_j^c$, respectively.
The height function is illustrated for a generic Dyck path in Fig. \ref{Heights}.
\begin{figure} \centering \includegraphics[width=0.8\textwidth]{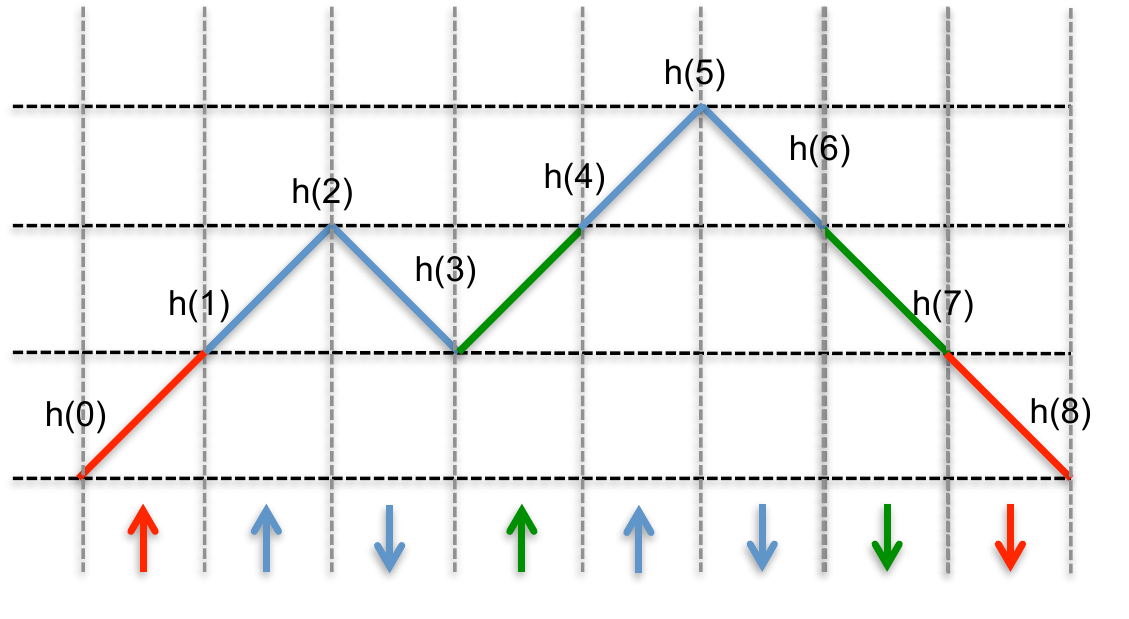} \caption{A spin configuration corresponding to a colored Dyck path and the corresponding height function $h(l)$.} \label{Heights} \end{figure}
To find the relative amplitude of this spin configuration as compared with the lowest possible spin configuration, we use successively the Fredkin moves to $|\uparrow \uparrow \downarrow\rangle \longrightarrow t_A|\uparrow \downarrow\uparrow \rangle $ to "flatten" the hill. The process is described in Fig. \ref{flattening}.
\begin{figure} \centering \includegraphics[width=1\textwidth]{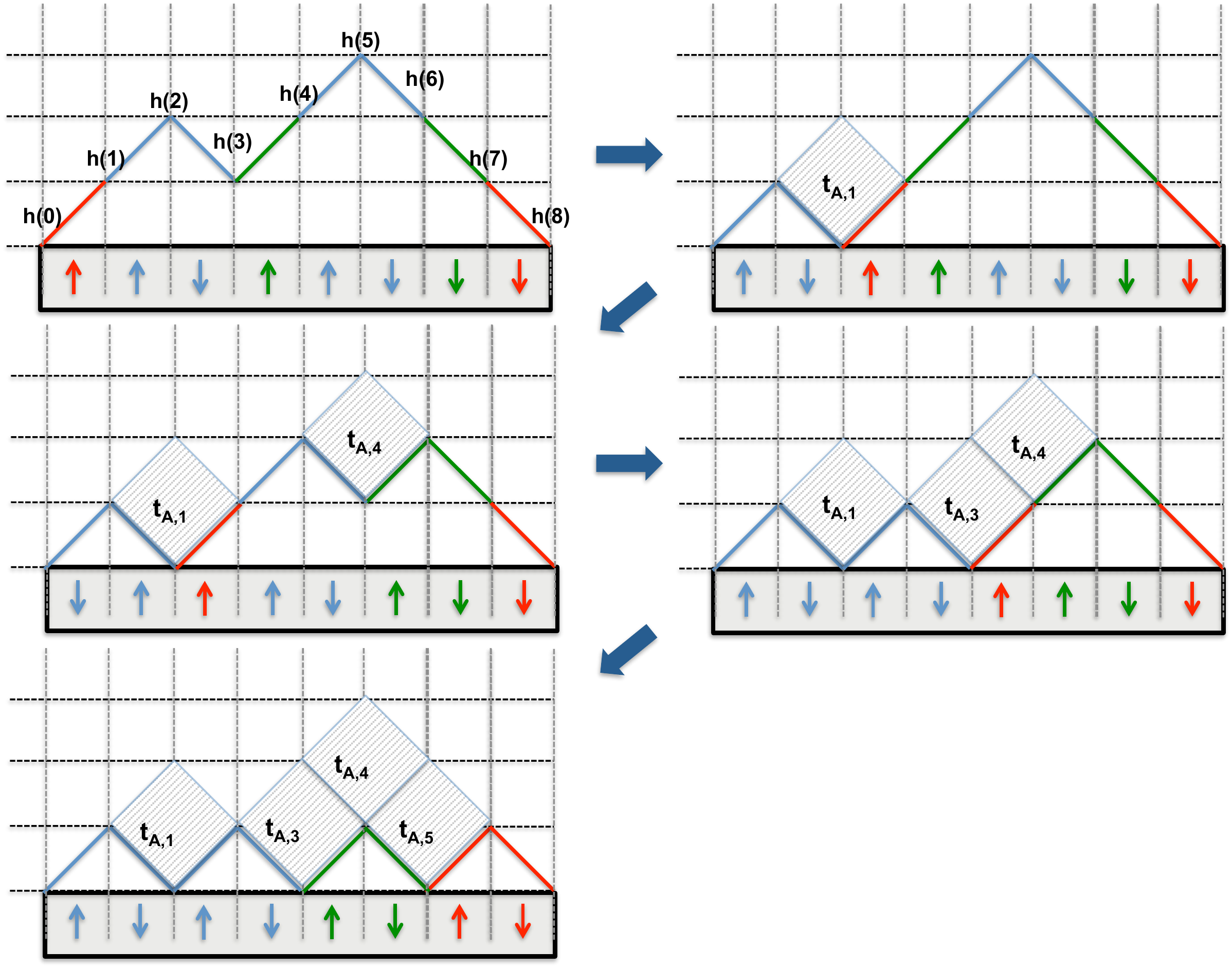} \caption{The ``flattening'' of a hill and it's amplitude. Starting from the left, we reduce the first by using the Fredkin move $|\uparrow \uparrow \downarrow\rangle \longrightarrow t_A|\uparrow \downarrow\uparrow \rangle $. This process is repeated for each peak until the lowest height hill is achieved.} \label{flattening} \end{figure}
In this way, the weight of each Dyck paths is related to the  weight of the  lowest height, $|\uparrow\downarrow\uparrow\downarrow...\uparrow\downarrow\rangle$ path. Note that we have suppressed the color index in this treatment  since, as mentioned above, in the ground state superposition all admissible colorings should  appear with the same amplitude.

The amplitude of a given Dyck path in the ground state of the model is thus given by counting the number of "diamonds" associated with each  $t_j^A$, and can be written in the form:
\begin{equation} |GS\rangle = \frac{1}{\mathcal{N}}\sum_{\substack{w \in \{s-colored\\ Dyck\ walks\}}} e^{\sum_{l=2}^{2n-2}[\frac{h(l)}{2}]\log (t_{l-1}^A)}|w\rangle, \label{eq: gs}\end{equation} 
where $[x]$ is the integer part of $x$ and $\mathcal{N}$ is a normalization factor.
In the case where $t_j^A\equiv t$, the ground state can be simply related to the area under the path as: 
\begin{equation} |GS\rangle = \frac{1}{\mathcal{N}'}\sum_{\substack{w \in \{s-colored\\ Dyck\ walks\}}} t^{\frac{1}{2}\mathcal{A}(w)}|w\rangle. \label{eq: gs}\end{equation} 

\section{Entanglement entropy} \label{EntEnt}

In this section, we employ the simplest choice of parameters which is translational invariant $t_A=t_B=t$ everywhere. 
When $t=1$, the entanglement entropy of the ground state scales as $\log n$ for $s=1$ and as $\sqrt{n}$ for $s>1$ \cite{dell2016violation,salberger2016fredkin}. The reason our deformation with the extra parameter can further increase the scaling of entropy is because when a spin is moved around its neighboring $\uparrow\downarrow$ pair, it is separated from its own partner paired in the same color, which is the first unpaired down spin to its right (or up spin to its left). This way, when a pair of spins required to be in the same color are shifted to different subsystems of the chain, they become a source of entanglement entropy between the two subsystems. Tuning the parameter $t$ to favor higher paths in the ground state superposition will now enhance the more substantial contribution from those with more unpaired spins in one subsystem. To put this in a mathematical way, we decompose the ground state into tensor products of states in the left and right halves of the chain. 
\begin{equation} 
|GS\rangle = \sum_{m=0}^{n} \sqrt{p_{n,m}} \sum_{x\in \{\uparrow^1,\uparrow^2,\ldots,\uparrow^s\}^m} |\hat{C}_{0,m,x}\rangle_{1,\ldots,n} \otimes |\hat{C}_{m,0,\bar{x}}\rangle_{n+1,\ldots,2n}, 
\end{equation} 
where $|\hat{C}_{p,q,x}\rangle$  is a weighted superposition of spin configurations with $p$ excess $\downarrow$, $q$ excess $\uparrow$ and a particular coloring $x$ of the unmatched arrows, such that $\langle GS|(|\hat{C}_{0,m,x}\rangle_{1,\ldots,n} \otimes |\hat{C}_{m,0,\bar{x}}\rangle_{n+1,\ldots,2n}) \neq 0$, and $\bar{x}$ is the coloring in the second half of the chain that matches $x$. The decomposition gives the Schmidt number \begin{equation} p_{n,m}(s,t) = \frac{M_{n,m}^2(s,t)}{N_n(s,t)}, \end{equation} where \begin{align}M_{n,m}(s,t) &\equiv s^{\frac{n-m}{2}}\sum_{\substack{w\in \{1st\ half\ of\ Dyck\\ walks\ stopping\ at\ (n,m)\}}}t^{\mathcal{A}(w)},\\ N_n(s,t) &\equiv \sum_{m=0}^{n}s^mM_{n,m}^2(s,t). \end{align}  And the entanglement entropy of the half chain in the ground state is given by 
\begin{equation} 
S_n(s,t) = -\sum_{m=0}^ns^mp_{n,m}(s,t)\log p_{n,m}(s,t).
\end{equation}

To study the behavior of $M_{n,m}$ as a function of $m$, we observe that they satisfy the following recurrence relations, \begin{equation} \begin{aligned} M_{k+1,k+1} &= t^{k+\frac{1}{2}} M_{k,k},\\ M_{k+1,m} &= st^{m+\frac{1}{2}} M_{k,m+1} + t^{m-\frac{1}{2}} M_{k,m-1}, \quad \text{for}\ 0<m<k, \\ M_{k+1,0} &= st^{\frac{1}{2}} M_{k,1}. \end{aligned}\label{eq: recur}\end{equation} Notice that the $M_{n,m}$ is only non-vanishing for $m$'s of same parity as $n$. 

From these relations, we can see that for large enough $t$, $M_{n,m}$ will be monotonically increasing as we increase $m$ by increments of $2$. Paths with height in the middle scaling as $O(n)$ will contribute more to the entanglement entropy from the $s^{O(n)}$ possible colorings of unmatched spins. In particular, the half chain entanglement entropy will also scale linearly with system size $n$. In the next subsection, we give a rigorous proof that this is true in the thermodynamic, and that this critical phase of large entanglement spans the entire half line $t>1$.

\subsection{$t>1,s>1$ phase: Volume scaling of entropy.}

In this section we repeat the steps taken in \cite{zhang2016quantum}, to prove volume scaling for weighted Motzkin walks, with a few modifications. For arbitrary $t>1$, the non-zero entries of $M_{n,m}$ are not necessarily monotonic in terms of $m$, but we can still show that for a given $n$, $M_{n,m}$ reaches its maximum at some $m=m^*$, within a finite distance away from $m=n$ independent of the system size $n$ itself. This is not obvious in the step-by-step recurrence relations, but becomes clear as we take into account the accumulated effect of the evolution of the coefficients $M_{n,m}$ with respect to $n$. To see this, we summarize \eqref{eq: recur} in the following operator formalism. 

As in \cite{zhang16Quantum}, we represent the distributions of $M_{k,m}$ as components of the state at `time' $k$ during the `evolution' in a basis spanned by $|m\rangle$, $m=0,1,2,\ldots$.
\begin{equation} |\mathcal{M}_k\rangle = \sum_{m=0}^\infty M_{k,m} |m\rangle, \qquad M_{k,m} = 0\ \text{if}\ m>k.\end{equation} 
We we define `shift' and `height' operators to describe the `evolution' of the the states $|\mathcal{M}_k\rangle$ as 
\begin{align} \mathcal{S} |m\rangle &= |m-1\rangle,  \mathcal{S} |0\rangle= 0; \\ \mathcal{H} |m\rangle &= m |m\rangle. \end{align}
One can check that the recurrence relations \eqref{eq: recur} translate to
\begin{align*}&\langle m | \mathcal{M}_{k+1}\rangle = M_{k+1,m} = s t^{m + \frac{1}{2}} \langle m+1|\mathcal{M}_k\rangle + t^{m - \frac{1}{2}} \langle m-1|\mathcal{M}_k\rangle \\ = &\langle m | s t^{\mathcal{H} + \frac{1}{2}} \mathcal{S} + t^{\mathcal{H} - \frac{1}{2}} \mathcal{S}^\dagger |\mathcal{M}_k\rangle, \end{align*} 
which gives us:
\begin{equation} |\mathcal{M}_{k+1}\rangle = t^{\mathcal{H}} (s\sqrt{t}\mathcal{S} + \frac{1}{\sqrt{t}}\mathcal{S}^\dagger)|\mathcal{M}_k\rangle. \end{equation} 

Using the commutation relation
\begin{equation} t^{k\mathcal{H}} (s\sqrt{t}\mathcal{S}  + \frac{1}{\sqrt{t}}\mathcal{S}^\dagger) = (st^{-(k - \frac{1}{2})} \mathcal{S}  + t^{k - \frac{1}{2}} \mathcal{S}^\dagger) t^{k\mathcal{H}}, \end{equation}
we keep moving the $t^{k\mathcal{H}}$ operators all the way to the right until it disappears when acting on 
$|0\rangle$ we obtain:
\begin{align} |\mathcal{M}_n\rangle &= [t^{\mathcal{H}} (s\sqrt{t}\mathcal{S} + \frac{1}{\sqrt{t}}\mathcal{S}^\dagger)]^n |\mathcal{M}_0\rangle = \vec{\mathcal{K}}\prod_{k=1}^n  (st^{-(k - \frac{1}{2})} \mathcal{S} + t^{k - \frac{1}{2}} \mathcal{S}^\dagger) |0\rangle. \label{korder}\end{align}
Here $\vec{\mathcal{K}}$ denotes ordering the multiplications in the product such that factors with greater $k$ value are on the right. For $t>1$ the factors in the product above are dominated by the $\mathcal{S}^\dagger$ term for large $k$. In other words, at some point during the evolution, the distribution of $M_{m}$ starts shifting at velocity $1$ to the right along the $m$ axis without much spreading. For a larger $t$, this happens shortly after the evolution starts, while for smaller values of $t$, it takes longer to reach this stable propagation. In any case, as we show below, the maximum of $M_{n,m}$ is a within finite distance away from $m=n$.

\begin{Lem} Let $m^*$ be such that $\sup_{m} M_{n,m} = M_{n,m^*}$, then $\exists N_{0} < n$, such that when $t > 1$, $m^* \in [n - 2N_{0}, n]$. \label{Lemma:mstar}\end{Lem}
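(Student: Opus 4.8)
\emph{Proof plan.} The plan is to work entirely with the operator representation \eqref{korder} and to exploit the fact that, for $t>1$, the step operators become ``pure raising'' extremely fast. First I would pull a scalar out of each factor: writing $A_k:=st^{-(k-\frac12)}\mathcal{S}+t^{k-\frac12}\mathcal{S}^\dagger=t^{k-\frac12}\big(\mathcal{S}^\dagger+\epsilon_k\mathcal{S}\big)$ with $\epsilon_k:=st^{1-2k}$, and using $\sum_{k=1}^n(k-\tfrac12)=n^2/2$, one gets $|\mathcal{M}_n\rangle=t^{n^2/2}\,|\Phi_n\rangle$ with $|\Phi_n\rangle:=\vec{\mathcal{K}}\prod_{k=1}^n\big(\mathcal{S}^\dagger+\epsilon_k\mathcal{S}\big)|0\rangle$. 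Since all coefficients $\langle m|\Phi_n\rangle$ are non-negative, the maximizer of $M_{n,m}$ in $m$ is the maximizer of $\langle m|\Phi_n\rangle$, so it suffices to locate the latter. One immediate fact: the coefficient of $|n\rangle$ in $|\Phi_n\rangle$ is exactly $1$ (the unique contributing term takes $\mathcal{S}^\dagger$ from every factor), hence $\max_m\langle m|\Phi_n\rangle\ge1$.

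Next I would split $|\Phi_n\rangle=B\,(C|0\rangle)$ into a ``late'' block $B:=\vec{\mathcal{K}}\prod_{k=1}^{N_0}(\mathcal{S}^\dagger+\epsilon_k\mathcal{S})$ and an ``early'' block $C:=\vec{\mathcal{K}}\prod_{k=N_0+1}^{n}(\mathcal{S}^\dagger+\epsilon_k\mathcal{S})$, with a cutoff $N_0=N_0(s,t)$ to be fixed last. Expanding the early block, $C|0\rangle=\sum_{\ell\ge0}c_{n-N_0-2\ell}\,|n-N_0-2\ell\rangle$, where $c_{n-N_0-2\ell}$ is the sum of $\prod_{i\in S}\epsilon_i$ over the $\ell$-element subsets $S\subseteq\{N_0+1,\dots,n\}$ whose associated word in $\mathcal{S},\mathcal{S}^\dagger$ does not annihilate $|0\rangle$; dropping the last (never-below-zero) restriction only enlarges the sum, so $c_{n-N_0-2\ell}\le e_\ell(\epsilon_{N_0+1},\dots,\epsilon_n)\le\frac{1}{\ell!}\big(\sum_{k>N_0}\epsilon_k\big)^{\ell}=:\frac{E^\ell}{\ell!}$ (with $e_\ell$ the $\ell$-th elementary symmetric polynomial and all $\epsilon_k>0$), while $c_{n-N_0}=1$. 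This is the rigorous form of the ``unit-speed propagation with bounded spreading'' picture: the weight piles up at the top site $n-N_0$ and leaks downward only by the geometrically small amounts $\epsilon_k$. For the late block one has the crude bound $|\langle m|B|j\rangle|\le P_0:=\prod_{k=1}^{N_0}(1+\epsilon_k)$ for all $m,j$, and $B$ shifts the index by at most $N_0$.

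Combining the two: for $m<n-2N_0$ the matrix element $\langle m|B|n-N_0-2\ell\rangle$ vanishes unless $\ell\ge1$, so $\langle m|\Phi_n\rangle=\sum_{\ell\ge1}c_{n-N_0-2\ell}\langle m|B|n-N_0-2\ell\rangle\le P_0\sum_{\ell\ge1}\frac{E^\ell}{\ell!}=P_0\,(e^{E}-1)$. Now fix $N_0$: since $\sum_k\epsilon_k<\infty$ for $t>1$, the product $P_0$ is bounded by $P_\infty:=\prod_{k=1}^\infty(1+\epsilon_k)<\infty$ uniformly in $N_0$, while $E=\sum_{k>N_0}\epsilon_k=\frac{s\,t^{-1-2N_0}}{1-t^{-2}}\to0$ as $N_0\to\infty$; choose $N_0$ so large that $P_\infty(e^{E}-1)<1$. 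Then every $m<n-2N_0$ satisfies $\langle m|\Phi_n\rangle<1\le\max_{m'}\langle m'|\Phi_n\rangle$, so $m^*\ge n-2N_0$; together with the trivial $m^*\le n$ (as $M_{n,m}=0$ for $m>n$ while $M_{n,n}=t^{n^2/2}>0$) this gives $m^*\in[n-2N_0,\,n]$ for all $n>N_0$.

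The step I expect to be the real obstacle is the choice of $N_0$ itself: enlarging $N_0$ shrinks the leakage $E$ but simultaneously enlarges the late-block constant $P_0$, which looks circular. The resolution is exactly the uniform bound $P_0\le P_\infty<\infty$, which decouples the two effects. The remaining work is routine bookkeeping: tracking the parity constraint (all nonzero $M_{n,m}$ have $m\equiv n\bmod2$, consistent with $n-N_0-2\ell$ followed by up to $N_0$ unit shifts landing on parity $n$), and noting that imposing the never-below-zero condition on the lattice paths only removes terms and therefore never spoils any of the upper bounds above.
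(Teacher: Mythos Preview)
Your proposal is correct and follows essentially the same route as the paper. Your $\epsilon_k$ coincides with the paper's $c_k=st^{-2(k-1/2)}$, your early/late split at $N_0$ is the paper's split, your bound $\sum_{\ell\ge1}c_{n-N_0-2\ell}\le e^{E}-1$ is the paper's $\ell^1$ estimate $\|t^{-n^2/2}|\mathcal{M}'_n\rangle-|n-N_0\rangle\|_1\le\prod_{k>N_0}(1+c_k)-1<e^{\sum c_k}-1$ phrased via elementary symmetric polynomials, your late-block bound $P_0=\prod_{k\le N_0}(1+\epsilon_k)$ is the paper's operator $\ell^1$-norm bound, and your existence argument for $N_0$ via $P_\infty(e^{E}-1)<1$ is what the paper makes explicit by solving for $N_0$ in closed form.
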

\begin{proof} Let 
\begin{equation} |\mathcal{M'}_n\rangle =\vec{\mathcal{K}}\prod_{k={N_{0}}+1}^{n} (st^{-(k - \frac{1}{2})} \mathcal{S} + t^{k - \frac{1}{2}} \mathcal{S}^\dagger) |0\rangle. \end{equation}

Note that
\begin{equation} t^{-(k-\frac{1}{2})}\| st^{-(k-\frac{1}{2})} \mathcal{S} \|_1 \le st^{-2(k-\frac{1}{2})} \equiv c_k,\end{equation} 
so that:
\begin{eqnarray*} &
\| st^{-(k - \frac{1}{2})}\mathcal{S} + t^{k - \frac{1}{2}} \mathcal{S}^\dagger \|_1<t^{k - \frac{1}{2}}(1+c_{k})
\end{eqnarray*} 
we thus have 
\begin{eqnarray*} &  \| t^{-\sum_{k=N_{0}+1}^{n}(k - \frac{1}{2})} |\mathcal{M'}_n\rangle - |n-{N_{0}}\rangle \|_1 \le \prod_{k=N_{0}+1}^{n}(c_k+1) - 1 < e^{\sum_{k=N_{0}+1}^\infty c_k} -1 \\ & = e^{\frac{st}{t^2-1}t^{-2N_{0}}}-1 \equiv f(s,t)^{t^{-2N_{0}}} - 1.
\end{eqnarray*} 
The first inequality on the left follows from noting that $ |n-{N_{0}}\rangle$ appears in $t^{-\sum_{k=N_{0}+1}^{n}(k - \frac{1}{2})} |\mathcal{M'}_n\rangle $ with coefficient 1, and is exactly canceled. We have also used that $x+1\le e^x$.
Next,
\begin{align*} &\| t^{-\sum_{k=1}^{n}(k - \frac{1}{2})} |\mathcal{M}_n\rangle - \vec{\mathcal{K}}\prod_{k=1}^{N_{0}}(st^{-2(k-\frac{1}{2})}\mathcal{S} + \mathcal{S}^\dagger)|n-N_{0}\rangle \|_1 \\ \le &\| \vec{\mathcal{K}}\prod_{k=1}^{N_{0}}(st^{-2(k-\frac{1}{2})}\mathcal{S} + \mathcal{S}^\dagger) \|_1 \| t^{-\sum_{k=N_{0}+1}^{n}(k - \frac{1}{2})} |\mathcal{M'}_n\rangle - |n-N_{0}\rangle \|_1\\ < &(f(s,t)^{t^{-N_{0}}} - 1) \prod_{k=1}^{N_{0}}(1+c_k) < (f(s,t)^{t^{-N_{0}}} - 1)e^{\sum_{k=1}^{N_{0}}}c_k  < (f(s,t)^{t^{-N_{0}}} - 1)f(s,t). \end{align*} Let 
\begin{equation} M'_{n,m} = \langle m|\vec{\mathcal{K}}\prod_{k=1}^{N_{0}}(st^{-2(k-\frac{1}{2})}\mathcal{S} + \mathcal{S}^\dagger)|n-N_{0}\rangle, \end{equation} then clearly $M'_{n,m} = 0$ for $m < n - 2{N_{0}}$. 
If we choose 
\begin{equation} N_{0}= \begin{cases} 0 &  f(s,t) < \frac{1+ \sqrt{5}}{2}, \\ -\frac{\log \frac{\log (f^{-1}(s,t) + 1)}{\log f(s,t)}}{\log t}, &\text{otherwise},\end{cases} \end{equation} then 
\begin{equation} \| t^{-\frac{n^2}{2}} |\mathcal{M}_n\rangle - \sum_{m = n - 2 N_{0}}^n M'_{n,m} |m\rangle \|_1 < 1 = M'_{n,n} \le \sup_{m} M'_{n,m}. \end{equation} Therefore $\exists m^* \in [n-2N_{0},n]$, such that $M_{n,m^*} \geq M_{n,m}$ for all $m$. \end{proof}

This allows us to prove the linear scaling of the entanglement entropy.

\begin{Thm} In the state \eqref{eq: gs}, when $t>1$, the entanglement entropy of half of the chain is bounded from below by $S_n > n \log s + C(s,t)$, where $C(s,t)$ is an $n$ independent constant.\end{Thm}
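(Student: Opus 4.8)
The plan is to reduce the lower bound on $S_n$ to a bound on the \emph{mean height sector} under the Schmidt weights, and to control that mean with Lemma~\ref{Lemma:mstar}. Write $P_m := s^m p_{n,m}(s,t) = s^m M_{n,m}^2/N_n$ for the total weight carried by the $s^m$ (equal) Schmidt coefficients of sector $m$, so $\sum_m P_m = 1$ and, by the chain rule for entropy (each sector contributes a uniform color entropy $m\log s$),
\begin{equation}
S_n(s,t) = -\sum_{m=0}^n s^m p_{n,m}\log p_{n,m} = -\sum_{m=0}^n P_m\log\frac{P_m}{s^m} = H(\{P_m\}) + \log s\sum_{m=0}^n m\,P_m ,
\end{equation}
with $H(\{P_m\}) \ge 0$ the Shannon entropy of the sector distribution. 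Dropping $H$ and writing $m = n-(n-m)$ gives $S_n \ge \log s\bigl(n - \sum_m(n-m)P_m\bigr)$, so it suffices to prove $\sum_m(n-m)P_m \le C'(s,t)$ for a constant independent of $n$; the theorem then holds with $C(s,t) = -C'(s,t)\log s$.

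It is at this point that the whole Schmidt spectrum, not just its largest block, must be used: estimating $S_n$ by a single maximal term only gives $S_n \ge P_{m^*}\,m^*\log s$ with $P_{m^*}<1$, the wrong coefficient of $n$. To bound $\sum_m(n-m)P_m$, let $m^*$ maximize $M_{n,m}$; by Lemma~\ref{Lemma:mstar}, $m^* \ge n-2N_0$ with $N_0 = N_0(s,t)$ independent of $n$. Since $M_{n,m}=0$ for $m>n$ we have $n-m\ge 0$ on the support, and using $M_{n,m}^2\le M_{n,m^*}^2$ together with $n-m=2j$,
\begin{equation}
\sum_m (n-m)\,s^m M_{n,m}^2 \;\le\; M_{n,m^*}^2 \sum_{j\ge 0} 2j\,s^{\,n-2j} \;=\; M_{n,m^*}^2\,s^n\,\gamma(s),\qquad \gamma(s):=\sum_{j\ge 0}2j\,s^{-2j}<\infty
\end{equation}
because $s>1$. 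Keeping only the $m=m^*$ term in $N_n$, $N_n \ge s^{m^*}M_{n,m^*}^2 \ge s^{\,n-2N_0}M_{n,m^*}^2$; dividing, $M_{n,m^*}^2$ and $s^n$ cancel and
\begin{equation}
\sum_m (n-m)P_m = \frac{\sum_m (n-m)s^m M_{n,m}^2}{N_n} \;\le\; s^{2N_0}\gamma(s) \;=:\; C'(s,t),
\end{equation}
an $n$-independent constant, as needed. Strictness is immediate: for $t>1$, $s>1$, $n\ge 2$ the sector distribution is spread over more than one $m$ (e.g. $M_{n,n}$ and $M_{n,n-2}$ are both nonzero), so $H(\{P_m\})>0$ and $S_n > n\log s + C(s,t)$.

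The real content, and the only genuine obstacle, is the input from Lemma~\ref{Lemma:mstar}: because the weighted half-path counts $M_{n,m}$ need not be monotone or sharply peaked in $m$, one must know a priori that they are maximized within an $O(1)$ window of $m=n$, in order to exclude the possibility that the Schmidt weight accumulates at small $m$, where the color degeneracy $s^m$ — and hence the entropy — is exponentially smaller than $s^n$; given that localization, the geometric factor $s^{-2j}$ collapses the contribution of the tail to a bounded constant. Beyond invoking the lemma correctly, so that $N_0$, and therefore $C'(s,t)$, is truly $n$-independent, the remaining steps are the elementary manipulations above.
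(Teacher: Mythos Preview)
Your proof is correct and follows essentially the same route as the paper: both decompose $S_n$ as the Shannon entropy of the sector distribution $P_m = s^m p_{n,m}$ plus $\log s\sum_m m\,P_m$, drop the nonnegative Shannon piece, rewrite $\sum_m m\,P_m = n - \sum_m(n-m)P_m$, and bound the remainder by replacing every $M_{n,m}^2$ in the numerator by $M_{n,m^*}^2$ and keeping only the $m=m^*$ term in the denominator, then invoking Lemma~\ref{Lemma:mstar} to get $s^{n-m^*}\le s^{2N_0}$. Your version is marginally tighter because you use the parity constraint $n-m=2j$ (giving $\gamma(s)=2s^2/(s^2-1)^2$) where the paper sums over all $l=n-m$ (giving $s/(s-1)^2$), and you make the strictness argument explicit via $H(\{P_m\})>0$; but these are cosmetic differences, not a different approach.
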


\begin{proof} We separate a linear term from $S_{n}$ as follows (below we supress the $n$ index in $M_{n,m}$):  \begin{eqnarray} \nonumber &S_n = \sum_{m=0}^{n}s^m p_m\log s^m - \sum_{m=0}^n s^m p_m\log(s^m p_m)> \sum_{m=0}^n s^m p_m m\log s\\ &= \sum_{l=0}^n s^{n-l}p_{n-l}(n - l)\log s = n\log s - \log s \sum_{l=0}^n\frac{s^{n-l}M_{n-l}^2}{{\sum_{m'=0}^ns^{m'}M_{m'}^2}}l \label{lower entropy bound}
 \end{eqnarray} 
 Taking $m^{*}$ such that $\sup_{m} M_{n,m} = M_{n,m^*}$ and using lemma \ref{Lemma:mstar}, we see that 
 \begin{align*} 
   &   \sum_{l=0}^n\frac{s^{n-l}M_{m^*}^2}{{\sum_{m'=0}^ns^{m'}M_{m'}^2}}l <\sum_{l=0}^n\frac{s^{n-l}M_{m^*}^2}{{ s^{m^{*}}M_{m^{*}}^2}}l= s^{n - m^*} \sum_{l=0}^ns^{-l}l \\ &< s^{2N_{0}} \sum_{l=0}^ns^{-l}l < s^{2N_{0}} \sum_{l=0}^{\infty}s^{-l}l =    \frac{s^{2N_{0}+1}}{(s-1)^2} . \end{align*} Therefore, the remainder term on the right hand side of \eqref{lower entropy bound} is bounded. \end{proof}
One can see from the proof that the factor of $s^m$ is already enough to make the scaling of entropy linear, and all that is required for $p_m$ is that it doesn't destroy this exponential dependence on $m$. 

\subsection{$t<1$ and any $s$:  Bounded entanglement entropy.}   

Contrary to the case studied above, when $t<1$, we expect Dyck paths with smaller areas below to be exponentially preferred in the ground state superposition. But this time, for the entropy to reflect the predominance of lower path, where less mutual information between the two subsystems can be stored, the behavior of $p_m$ needs to not only be decreasing exponentially with $m$, but also fast enough to overcome the exponential increasing $s^m$ factor. Considering that, we define 
\begin{equation} \tilde{M}_{n,m}=s^{\frac{m}{2}}M_{n,m}~~,~~\tilde{p}_{n,m}=\frac{\tilde{M}_{n,m}^2}{\sum_{m=0}^n \tilde{M}_{n,m}^2}.\end{equation} Substitution into \eqref{eq: recur} gives the following relations,
\begin{equation} \begin{aligned} \tilde{M}_{k+1,k+1} &= \sqrt{s} t^{k + \frac{1}{2}}\tilde{M}_{k,k},\\
\tilde{M}_{k+1,m} &= \sqrt{s} (t^{m+\frac{1}{2}}\tilde{M}_{k,m+1} + t^{m-\frac{1}{2}}\tilde{M}_{k,m-1}), \quad \ 0<m<k,\\ \tilde{M}_{k+1,0} &= \sqrt{s} t^{\frac{1}{2}}\tilde{M}_{k,1}\label{tildeRec} \end{aligned} \end{equation}

To prove the entropy is bounded, we need the following lemmas.

\begin{Lem} \begin{equation} \tilde{M}_{n+2,m}^2 >  \tilde{M}_{n,m}^2. \end{equation}\label{Lem:MonotonicDenom} \end{Lem}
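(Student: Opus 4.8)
Since every $\tilde M_{n,m}\ge 0$, it suffices to prove the unsquared statement $\tilde M_{n+2,m}>\tilde M_{n,m}$ (understood for $m$ in the common support, i.e.\ $m\equiv n\pmod{2}$ with $0\le m\le n+2$; for all other $m$ both sides vanish). The plan is to recast \eqref{tildeRec} in the same ``evolution operator'' language used for $|\mathcal M_k\rangle$. Writing $|\tilde{\mathcal M}_n\rangle=\sum_{m\ge 0}\tilde M_{n,m}|m\rangle$, the recursion \eqref{tildeRec} is $|\tilde{\mathcal M}_{n+1}\rangle=\tilde{\mathcal T}\,|\tilde{\mathcal M}_n\rangle$ with $\tilde{\mathcal T}=\sqrt{s}\,t^{\mathcal H}\big(\sqrt{t}\,\mathcal S+\tfrac{1}{\sqrt t}\mathcal S^{\dagger}\big)$ and $|\tilde{\mathcal M}_0\rangle=|0\rangle$, so that $\tilde M_{n,m}=\langle m|\tilde{\mathcal T}^{\,n}|0\rangle$. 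The one structural fact I need is that every matrix element $\langle m|\tilde{\mathcal T}|m'\rangle=\sqrt{s}\,t^{m+1/2}\delta_{m',m+1}+\sqrt{s}\,t^{m-1/2}\delta_{m',m-1}$ is nonnegative, hence $\tilde{\mathcal T}^{\,n}$ has nonnegative entries in the $|m\rangle$ basis for every $n$.

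Next I would isolate the two-step increment: a short computation gives $\tilde{\mathcal T}|0\rangle=\sqrt{st}\,|1\rangle$ and $\tilde{\mathcal T}^{2}|0\rangle=st\,|0\rangle+st^{2}\,|2\rangle$, hence $(\tilde{\mathcal T}^{2}-I)|0\rangle=(st-1)|0\rangle+st^{2}|2\rangle$. Applying $\tilde{\mathcal T}^{\,n}$ and pairing with $\langle m|$ yields the identity
\begin{equation}
\tilde M_{n+2,m}-\tilde M_{n,m}=\langle m|\tilde{\mathcal T}^{\,n}(\tilde{\mathcal T}^{2}-I)|0\rangle=(st-1)\,\tilde M_{n,m}+st^{2}\,\langle m|\tilde{\mathcal T}^{\,n}|2\rangle .
\end{equation}
In the regime $st>1$ — the one of interest for this subsection (for $s\ge 2$ it is the sub-wedge $1/s<t<1$) — both terms on the right are nonnegative, so $\tilde M_{n+2,m}\ge\tilde M_{n,m}$. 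For strictness: if $\tilde M_{n,m}>0$ (equivalently $0\le m\le n$, $m\equiv n\pmod{2}$) the first term is $>0$; the only other $m$ in the common support is $m=n+2$, and there $\langle m|\tilde{\mathcal T}^{\,n}|2\rangle>0$ because the all-up walk $2\to 3\to\cdots\to n+2$ contributes a positive weight, so the second term is $>0$. Squaring the resulting strict inequality gives the Lemma. An equivalent packaging, avoiding powers of $\tilde{\mathcal T}$, is a two-step induction on $n$ from $|\tilde{\mathcal M}_{n+2}\rangle-|\tilde{\mathcal M}_{n}\rangle=\tilde{\mathcal T}\big(|\tilde{\mathcal M}_{n+1}\rangle-|\tilde{\mathcal M}_{n-1}\rangle\big)$: nonnegativity of $\tilde{\mathcal T}$ propagates the componentwise inequality, leaving only the base cases $|\tilde{\mathcal M}_2\rangle\ge|\tilde{\mathcal M}_0\rangle$ and $|\tilde{\mathcal M}_3\rangle\ge|\tilde{\mathcal M}_1\rangle$, the first of which, $(st-1)|0\rangle+st^{2}|2\rangle\ge 0$, is the binding one.

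The mechanics above are routine; the real obstacle is pinning down the hypothesis and the admissible range of $m$. The inequality genuinely fails once $st\le 1$ — already $\tilde M_{2,0}^{2}=(st)^{2}$ versus $\tilde M_{0,0}^{2}=1$ — so the statement holds as written only on $1/s<t<1$, and the ``boundary'' values of $m$ (wrong parity, negative, or exceeding $n+2$), where both sides are zero, must be excluded from the literal claim. Once these are acknowledged, the only point that needs genuine care is the small combinatorial bookkeeping of the supports of $\langle m|\tilde{\mathcal T}^{\,n}|0\rangle$ and $\langle m|\tilde{\mathcal T}^{\,n}|2\rangle$ used in the strictness step.
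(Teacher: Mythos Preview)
Your approach is essentially the same as the paper's: both compute the two-step evolution acting on $|0\rangle$ in the shift/height operator formalism and then invoke nonnegativity of the transfer matrix. The paper works with the un-tilded $|\mathcal{M}_n\rangle$ through the $\vec{\mathcal K}$-ordered product \eqref{korder} and expands the two rightmost factors on $|0\rangle$; you do the equivalent computation directly with the single transfer operator $\tilde{\mathcal T}$, which is a slightly cleaner packaging of the same idea.

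The substantive difference is that your arithmetic is correct while the paper's contains a slip. In the paper's middle display the cross terms are $st\,\mathcal S\mathcal S^{\dagger}+st^{-1}\mathcal S^{\dagger}\mathcal S$, which they replace by the scalar $s(t+\tfrac1t)$; but $\mathcal S^{\dagger}\mathcal S|0\rangle=0$, so acting on $|0\rangle$ the correct contribution is $st|0\rangle$, not $s(t+\tfrac1t)|0\rangle$. This is precisely your identity $\tilde{\mathcal T}^{\,2}|0\rangle=st\,|0\rangle+st^{2}|2\rangle$. With the corrected coefficient the positivity argument only delivers the Lemma under the extra hypothesis $st>1$, and your counterexample $\tilde M_{2,0}^{2}=(st)^{2}<1=\tilde M_{0,0}^{2}$ is valid: the Lemma as stated is false for $st\le 1$, in particular for the $s=1,\,t<1$ case that the surrounding subsection explicitly intends to cover. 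So your proof is sound on $st>1$, your diagnosis of the needed hypothesis is right, and the paper's apparently unconditional conclusion is an artifact of the algebraic error above.
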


\begin{proof} From \eqref{korder}, we have \begin{align*} |\mathcal{M}_{n+2}\rangle &= \vec{\mathcal{K}}\prod_{k=1}^n  (st^{-(k - \frac{1}{2})} \mathcal{S} + t^{k - \frac{1}{2}} \mathcal{S}^\dagger) (st^{-(n + \frac{1}{2})} \mathcal{S} + t^{n + \frac{1}{2}} \mathcal{S}^\dagger)(st^{-(n+1 + \frac{1}{2})} \mathcal{S} + t^{n+1 + \frac{1}{2}} \mathcal{S}^\dagger)|0\rangle, \\ &= \vec{\mathcal{K}}\prod_{k=1}^n  (st^{-(k - \frac{1}{2})} \mathcal{S} + t^{k - \frac{1}{2}} \mathcal{S}^\dagger) [s^2 t^{-2(n+1)} \mathcal{S}^2 + s (t + \frac{1}{t}) + t^{2(n+1)} \mathcal{S}^{\dagger 2}] |0\rangle \\ &= s(t+\frac{1}{t})|\mathcal{M}_n\rangle + \vec{\mathcal{K}}\prod_{k=1}^n  (st^{-(k - \frac{1}{2})} \mathcal{S} + t^{k - \frac{1}{2}} \mathcal{S}^\dagger) [s^2 t^{-2(n+1)} \mathcal{S}^2 + t^{2(n+1)} \mathcal{S}^{\dagger 2}] |0\rangle.\end{align*} The last term on the RHS of the equation contains non-zero contributions for all states $|m\rangle$, with $m=0,1,\ldots, n+2$, and we have: \begin{align*} M_{n+2,m} &> M_{n,m},\\ \tilde{M}_{n+2,m} &> \tilde{M}_{n,m} \qquad \forall m\geq 0,n\geq 1. \end{align*} \end{proof}
Next we establish the following bound on $\tilde{p}_{n,m}$:
\begin{Lem} \begin{equation} \tilde{p}_{n,m} < 36\frac{s^2}{t^2} t^{4m}.\end{equation}  \label{Lemma:pnm bound} \end{Lem}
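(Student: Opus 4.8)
The plan is to establish that, uniformly in $n$, the weight $\tilde p_{n,m}$ decays geometrically in $m$ with a ratio that is $<1$ once $t$ is small enough; this is the real content, and an inequality of the shape displayed in the lemma follows from it. First I would discard all but one term of the denominator: if $m_0\in\{0,1\}$ is the smaller index of the parity class of $n$ (so $\tilde M_{n,m_0}\neq 0$), then $\tilde N_n=\sum_{m'}\tilde M_{n,m'}^2\ge\tilde M_{n,m_0}^2$, whence $\tilde p_{n,m}\le\tilde M_{n,m}^2/\tilde M_{n,m_0}^2$, a bound that is automatically uniform in $n$. (If one wants a concrete lower estimate on $\tilde M_{n,m_0}^2$ itself, Lemma \ref{Lem:MonotonicDenom} gives monotonicity in $n$ down to $\tilde M_{2,0}^2=s^2t^2$ resp.\ $\tilde M_{3,1}^2$, though this is not needed.) It therefore suffices to control the ratio $\tilde M_{n,m}/\tilde M_{n,m_0}$, and for that I would track the one-step ratios $r_{n,m}:=\tilde M_{n,m}/\tilde M_{n,m-2}$, defined for $2\le m\le n$ with $m\equiv n\pmod 2$.

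The key estimate, read off from \eqref{tildeRec}, is obtained by keeping the numerator intact and dropping the non-negative $\tilde M_{n-1,m-3}$ term from the denominator of $r_{n,m}$: this gives $r_{n,m}\le t+t^2\,r_{n-1,m+1}$, with equality at the edge $m=2$ where $\tilde M_{n,0}=\sqrt s\,t^{1/2}\tilde M_{n-1,1}$ has a single term. The base case comes from the diagonal: $\tilde M_{k,k}=\sqrt s\,t^{k-1/2}\tilde M_{k-1,k-1}$ while $\tilde M_{k,k-2}\ge\sqrt s\,t^{k-3/2}\tilde M_{k-1,k-1}$, so $r_{k,k}\le t$. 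Iterating the key estimate drives $n$ down and the second index up; after $(n-m)/2$ steps one lands on the diagonal, yielding $r_{n,m}\le t+t^3+t^5+\cdots\le\frac{t}{1-t^2}$, a bound free of both $n$ and $m$. Multiplying the $(m-m_0)/2$ ratios linking $m_0$ to $m$ gives $\tilde M_{n,m}/\tilde M_{n,m_0}\le\bigl(\frac{t}{1-t^2}\bigr)^{(m-m_0)/2}$, hence $\tilde p_{n,m}\le\bigl(\frac{t}{1-t^2}\bigr)^{m-m_0}$.

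This is a genuine geometric decay whenever $\frac{t}{1-t^2}<1$, i.e.\ $t<\frac{\sqrt5-1}{2}$, and after absorbing the comfortable slack afforded by $s\ge1$, $0<t<1$ (the factor $36s^2/t^2$ is $\ge1$) it reproduces an estimate of the stated type; on the remaining range $\frac{\sqrt5-1}{2}\le t<1$ the trivial bound $\tilde p_{n,m}\le1$ on the finitely many relevant $m$ already suffices for the next step, the boundedness of $S_n$. I expect two delicate points. The first is the bookkeeping at the ends of the iteration --- the parity of $m$ versus $n$, the edge $m=2$, the diagonal $m$ close to $n$, and checking that the correct branch of \eqref{tildeRec} is invoked at each stage. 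The second, harder, point is sharpening the crude telescoping so that it is effective on all of $0<t<1$ with a clean constant: this requires keeping the discarded term and working with the exact identity $r_{n,m}=t^2\,r_{n-1,m-1}\,\dfrac{1+t\,r_{n-1,m+1}}{1+t\,r_{n-1,m-1}}$, which couples $r$ at $m\pm1$ and calls for a two-dimensional induction on $(n,m)$.
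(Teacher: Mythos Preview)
Your ratio-iteration argument is internally correct but proves a strictly weaker statement than the lemma, and only on part of the parameter range. What you obtain is $\tilde p_{n,m}\le \bigl(\tfrac{t}{1-t^2}\bigr)^{m-m_0}$, which is geometric decay in $m$ only when $t<\tfrac{\sqrt5-1}{2}$; even there the rate $t/(1-t^2)$ is never as small as $t^4$, so you do not reproduce the bound $36\,s^2t^{-2}\,t^{4m}$ stated. For $\tfrac{\sqrt5-1}{2}\le t<1$ your bound is vacuous, and the proposed escape---``the trivial bound $\tilde p_{n,m}\le1$ on the finitely many relevant $m$''---is not valid: the entropy sum runs over all $m\le n$, and with no decay you have no uniform-in-$n$ control on the tail, so boundedness of $S_n$ does not follow. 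The exact ratio identity you write down at the end is the right object to study, but the two-dimensional induction you would need to push the threshold from $\tfrac{\sqrt5-1}{2}$ up to $1$ is precisely the missing work.

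The paper's proof avoids the ratio recursion entirely and is both shorter and effective on the whole interval $0<t<1$. The idea is to apply the recursion \eqref{tildeRec} \emph{twice} directly to the numerator $\tilde M_{n,m}$, which yields
\[
\tilde M_{n,m}=s\,t^{2m}\bigl[t^{2}\tilde M_{n-2,m+2}+(t+t^{-1})\tilde M_{n-2,m}+t^{-2}\tilde M_{n-2,m-2}\bigr],
\]
so the factor $t^{4m}$ is extracted \emph{exactly} after squaring. The bracket is then bounded by $3$ times its largest term, and Lemma~\ref{Lem:MonotonicDenom} (monotonicity in $n$) replaces each $\tilde M_{n-2,\cdot}$ by the corresponding $\tilde M_{n,\cdot}$, which already sits in the denominator $\sum_{m'}\tilde M_{n,m'}^2$ and hence contributes at most $1$. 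This is where the paper uses Lemma~\ref{Lem:MonotonicDenom}; your sketch invokes it only parenthetically and does not actually need it. The upshot is that the $t^{4m}$ rate comes for free from two steps of the recursion, whereas your telescoping throws away the $\tilde M_{n-1,m-3}$ term at every stage and so degrades the rate to $t/(1-t^2)$.
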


\begin{proof} By definition of $\tilde{p}_{n,m}$, and using the recursion relation \eqref{tildeRec} twice consequtively, 
\begin{align*} \tilde{p}_{n,m} &=  \frac{\{\sqrt{s} [t^{m+\frac{1}{2}}\sqrt{s} (t^{m+\frac{3}{2}}\tilde{M}_{n-2,m+2} + t^{m+\frac{1}{2}}\tilde{M}_{n-2,m}) +  t^{m-\frac{1}{2}}\sqrt{s}(t^{m-\frac{1}{2}}\tilde{M}_{n-2,m}+t^{m-\frac{3}{2}}\tilde{M}_{n-2,m-2})]\}^2}{\sum_{m=0}^{n} \tilde{M}_{n,m}^2} \\ &= \frac{s^2 t^{4m}[ t^{2}\tilde{M}_{n-2,m+2} + (t+\frac{1}{t})\tilde{M}_{n-2,m} +  t^{-2}\tilde{M}_{n-2,m-2}]^2}{\sum_{m=0}^{n} \tilde{M}_{n,m}^2} \\ &\le \frac{s^2 t^{4m}[3\max\{ t^{2}\tilde{M}_{n-2,m+2}, (t+\frac{1}{t})\tilde{M}_{n-2,m}, t^{-2}\tilde{M}_{n-2,m-2}\}]^2}{\sum_{m=0}^{n} \tilde{M}_{n,m}^2} \\ &\le 36\frac{s^2}{t^2} t^{4m} \frac{max\{ \tilde{M}_{n,m+2}^2, \tilde{M}_{n,m}^2, \tilde{M}_{n,m-2}^2\}}{\sum_{m=0}^n \tilde{M}_{n,m}^2} < 36\frac{s^2}{t^2} t^{4m}. 
\end{align*} Lemma \ref{Lem:MonotonicDenom} was used in the last line.\end{proof}

We now have the ingredients to prove the boundedness of entropy.

\begin{Thm} When $0<t<1, s\ge 1$, there exists a constant $C_0(s,t)$ independent of the system size $n$, that for any $n$, $S_n<C_0(s,t)$. \end{Thm}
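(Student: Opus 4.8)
The plan is to reduce the statement to two ingredients already available: the identity relating the true Schmidt weights to the tilded distribution, and the uniform pointwise bound of Lemma~\ref{Lemma:pnm bound}. First I would record that, since $N_n(s,t)=\sum_m s^m M_{n,m}^2=\sum_m \tilde M_{n,m}^2$, one has $s^m p_{n,m}=\tilde p_{n,m}$, and hence
\begin{equation*}
S_n=-\sum_{m=0}^n s^m p_{n,m}\log p_{n,m}=-\sum_{m=0}^n \tilde p_{n,m}\log\tilde p_{n,m}\;+\;(\log s)\sum_{m=0}^n m\,\tilde p_{n,m}.
\end{equation*}
So it suffices to bound, uniformly in $n$, the Shannon entropy $H(\tilde p_n):=-\sum_m \tilde p_{n,m}\log\tilde p_{n,m}$ of the probability vector $(\tilde p_{n,m})_{m=0}^n$ together with the ``mean height'' $\langle m\rangle_n:=\sum_m m\,\tilde p_{n,m}$.

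The mean height is immediate from Lemma~\ref{Lemma:pnm bound} and $0<t<1$:
\begin{equation*}
\langle m\rangle_n<\frac{36 s^2}{t^2}\sum_{m=0}^\infty m\,t^{4m}=\frac{36 s^2}{t^2}\cdot\frac{t^4}{(1-t^4)^2},
\end{equation*}
an $n$-independent constant, which already controls the second term above.

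For the entropy I would compare $\tilde p_n$ against the exponentially decaying reference weight $b_m:=\tfrac{36 s^2}{t^2}\,t^{4m}$. Writing $B:=\sum_{m\ge 0}b_m=\tfrac{36 s^2}{t^2}\,\tfrac{1}{1-t^4}$, so that $(b_m/B)$ is a probability distribution, nonnegativity of relative entropy (equivalently Jensen's inequality for $\log$) applied to $\tilde p_n$ against $b/B$ gives
\begin{equation*}
H(\tilde p_n)\le-\sum_{m=0}^n \tilde p_{n,m}\log\frac{b_m}{B}=\log B-\log\frac{36 s^2}{t^2}-4(\log t)\,\langle m\rangle_n .
\end{equation*}
Since $\log t<0$ and $\langle m\rangle_n$ is bounded by the previous step, the right-hand side is bounded by an $n$-independent constant, and assembling the two estimates yields $S_n<C_0(s,t)$ with an explicit $C_0$ built from $B$, the bound on $\langle m\rangle$, and $\log s$.

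The only genuinely non-routine point is the entropy estimate: one cannot bound $-\tilde p_{n,m}\log\tilde p_{n,m}$ termwise by $-b_m\log b_m$ because $x\mapsto-x\log x$ is not monotone, and it is precisely the Gibbs/Jensen comparison against the normalized reference $b_m/B$ that circumvents this. A minor bookkeeping caveat: Lemma~\ref{Lemma:pnm bound} is derived from the interior recursion \eqref{tildeRec}, so for the handful of boundary values $m\in\{0,1,n-1,n\}$ (and for the few smallest $n$, where $S_n$ is in any case a finite sum over a finite index set) one should either verify the bound directly or absorb their contribution into $C_0(s,t)$; neither affects the uniform-in-$n$ conclusion.
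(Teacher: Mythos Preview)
Your argument is correct and shares the same starting point as the paper: both reduce $S_n$ to $H(\tilde p_n)+(\log s)\langle m\rangle_n$ and bound the mean height $\langle m\rangle_n$ directly by summing the geometric-type bound of Lemma~\ref{Lemma:pnm bound}. The genuine difference is in how the Shannon entropy $H(\tilde p_n)$ is controlled. The paper proceeds by a threshold argument: it fixes $m_0$ so that $\tilde p_{n,m}<1/e$ for $m>m_0$, uses the crude bound $-x\log x\le 1/e$ for the finitely many terms $m\le m_0$, and for $m>m_0$ exploits the monotonicity of $-x\log x$ on $(0,1/e)$ together with Lemma~\ref{Lemma:pnm bound} to replace $\tilde p_{n,m}$ by $36(s^2/t^2)t^{4m}$ termwise. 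You instead invoke the Gibbs inequality $H(\tilde p_n)\le -\sum_m \tilde p_{n,m}\log(b_m/B)$ against the normalized reference $b_m/B$, which linearizes the bound and feeds it back into the already-established control of $\langle m\rangle_n$. Your route is slicker and avoids the case split entirely (and, as you correctly note, sidesteps the non-monotonicity of $-x\log x$); the paper's route is more elementary in that it never names relative entropy, at the cost of the somewhat ad hoc choice of $m_0$. Your caveat about the boundary indices in Lemma~\ref{Lemma:pnm bound} applies equally to the paper's own use of that lemma and does not affect the conclusion.
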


\begin{proof} Using Lemma \ref{Lemma:pnm bound} we see that when \begin{equation} m > m_0 \equiv \Big[\frac{\log(\frac{1}{36e} {\frac{t^2}{s^2}})}{4 \log{t}}\Big] + 1, \end{equation} we have \begin{equation} \tilde{p}_{n,m} < 36\frac{s^2}{t^2} t^{4m}<{1\over e}. \end{equation} 
It is easy to check that the function $-x\log(x)$ is monotonically increasing when $x\in (0, \frac{1}{e})$, in other words, for $m>m_0$, 
\begin{eqnarray} \tilde{p}_{n,m}< 36\frac{s^2}{t^2} t^{4m}<{1\over e}~\Longrightarrow~ -\tilde{p}_{n,m}\log \tilde{p}_{n,m} < -36\frac{s^2}{t^2} t^{4m} \big(\log( {\frac{36s^2}{t^2}}) + 4m\log t\big). \end{eqnarray} 
Therefore \begin{align*} S_n &= -\sum_{m=0}^n \tilde{p}_{n,m}\log \tilde{p}_{n,m} + \log s \sum_{m=0}^n \tilde{p}_{n,m} m\\ &< -\sum_{m=0}^{m_0} \tilde{p}_{n,m}\log \tilde{p}_{n,m} -\sum_{m=m_0+1}^\infty 36\frac{s^2}{t^2} t^{4m} \big(\log( {\frac{36s^2}{t^2}}) + 4m\log t\big) + \log s \sum_{m=0}^\infty  36\frac{s^2}{t^2} t^{4m} m\\ &< \frac{m_0+1}{e} - 36\frac{s^2 t^{4m_0+2}}{1-t^4} \log( {\frac{36s^2}{t^2}}) - \frac{144 s^2 t^{4 m_0+2} (m_0(1- t^4) +1)}{(t^4-1)^2} \log t + \frac{36s^2 t^2}{(t^4-1)^2} \log s \\&\equiv C_0(s,t),
 \end{align*}
where we used $\sup_{x\in(0,1)}-x log(x)=e^{-1}$ for entropy terms with $m\leq m_{0}$ in the last inequality.
\end{proof}

Notice our proof here does not rely on the fact that $s>1$, and it applies to the $s=1$ case as well. 

\section{Scaling of the Spectral Gap} \label{SpGp}

\subsection{Super-exponential Upper bound in the $t>1,s=1$ Phase}

Since entanglement entropy is a measure of correlation in the system, a high entanglement entropy indicates that the system is highly correlated and also a gapless spectrum (in the thermodynamic limit) \cite{hastings2007area,arad2013area}. As our model at $t>1,s>1$ exhibits linear scaling of entanglement entropy, we expect the spectral gap to be also decreasing faster with system size than the $t=1$. Here, we give variational proof that the spectral gap for $t>1,s>1$ decreases exponentially with a square of the system size. 

Just as the linear scaling of entanglement entropy results from the prominence of the higher weighted paths in the ground state superposition, gaplessness can be shown by truncating lower weighted paths at the price of softly violating the superposition required to make the projectors in the Hamiltonian vanishing. 
To do so it is convenient to define a `prime walk' as follows: 
\begin{Def} A prime Dyck walk is a Dyck walk that is always above the x-axis, except at the endpoints. \end{Def}
By this definition, a Dyck walk is either prime or a concatenation of prime walks (Fig. \ref{excit} exhibits a Dyck walk in solid line made of two prime walks and one in dashed line made of three prime walks).

To construct a low energy variational excited state, we start with an auxiliary state that projects out all the walks in the ground state superposition whose longest prime walk has a length smaller than $n+1$. That is, define:
\begin{eqnarray}
P_{n,>}= \text{ \{s-colored walks containing a prime walk of length $l>n$ \} },
\end{eqnarray}
and $P_{n,<}=$ the complement of $P_{n,>}$ .
Our auxiliary state is defined as:
\begin{eqnarray}
 \frac{1}{\mathcal{N^*}} \sum_{\substack{w \in P_{n,>}} } t^{\frac{1}{2}\mathcal{A}(w)} |w\rangle.
\end{eqnarray}
For $t>1$ higher walks are favored rendering the auxiliary state largely overlapping with the ground state and therefore unqualified as a low energy excitation state. However, the color degree of freedom allows us to make this state orthogonal to the ground state by permuting the color of the last down move (or equivalently the first up move) in the longest prime walk. This way, all walks in the new superposition have one pair of spins with unmatched colors, and consequently orthogonal to all paths in the ground state. The choice of the `n+1' threshold on the cut-off of longest prime walk length eliminates the potential ambiguity in the location of the color permutation so that each path in the superposition has exactly one pair of unmatched colores. 

\begin{Thm} The spectral gap of the $t>1,s>1$ phase has an upper bound of $\frac{2(4s)^{n}}{1+t^2} t^{-n^2/2}$. \end{Thm}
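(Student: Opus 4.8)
The estimate will come from the variational principle: $H$ is frustration free with $H|GS\rangle=0$, so every eigenvalue above $0$ is at least $\Delta$, and it suffices to exhibit a single state $|\psi\rangle$ with $\langle GS|\psi\rangle=0$ and $\langle\psi|H|\psi\rangle\le\frac{2(4s)^n}{1+t^2}t^{-n^2/2}\langle\psi|\psi\rangle$. The natural candidate is the auxiliary state already introduced, orthogonalized to the ground state: take the unnormalized ground state $\sum_w t^{\mathcal{A}(w)/2}|w\rangle$, restrict the sum to $w\in P_{n,>}$, and apply a fixed nontrivial cyclic color permutation to the last down step of the longest prime component of $w$. Since the chain has length $2n$, at most one prime component can have length $>n$, so this step is unambiguous; its outermost $\uparrow\downarrow$ pair was monochromatic in every admissible colored Dyck path but is now mismatched, so every basis vector occurring in $|\psi\rangle$ is orthogonal to every basis vector of $|GS\rangle$, giving $\langle GS|\psi\rangle=0$. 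For the denominator I keep only the contribution of the maximal path $\uparrow^n\downarrow^n\in P_{n,>}$, of area $n^2$, so $\langle\psi|\psi\rangle\ge t^{n^2}$.

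The substance is the bound on $\langle\psi|H|\psi\rangle$. I would first observe $H_\partial|\psi\rangle=0$ (all walks are Dyck paths) and $H_X|\psi\rangle=0$: the recolored site is the bottom of a prime component of length $>n\ge2$, so its left neighbor is $\downarrow$ and its right neighbor (if any) is $\uparrow$, hence recoloring produces no mismatched adjacent $\uparrow\downarrow$ pair, and the site belongs to no adjacent matched pair, so the second term of $H_X$ is untouched. Thus $\langle\psi|H|\psi\rangle=\langle\psi|H_F|\psi\rangle$. The key lemma is a localization statement: whenever a Fredkin move connects two walks that both lie in $P_{n,>}$, the color permutation intertwines with the move — recoloring then moving equals moving then recoloring — so the corresponding Fredkin projector annihilates $|\psi\rangle$ on that pair exactly as it annihilates the ground state. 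Hence only ``surface'' moves, those carrying a walk out of $P_{n,>}$, contribute. Because a single Fredkin move can shorten a prime component only by shearing a length-$2$ block off its front ($\uparrow\uparrow\downarrow\to\uparrow\downarrow\uparrow$) or its back ($\uparrow\downarrow\downarrow\to\downarrow\uparrow\downarrow$), a surface walk $v$ must have its longest prime component of length $n+1$ or $n+2$, with that end equal to $\uparrow\uparrow\downarrow$ or $\uparrow\downarrow\downarrow$; such a $v$ feeds at most two Fredkin projectors, each contributing exactly $t^{\mathcal{A}(v)}/(1+t^2)$, so $\langle\psi|H_F|\psi\rangle\le\frac{2}{1+t^2}\sum_{v\ \mathrm{surface}}t^{\mathcal{A}(v)}$.

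It then remains to estimate the surface sum. A surface walk is a prime component of length $n+1$ or $n+2$, which is forced back to height $1$ near one end and so has area at most roughly $\tfrac{1}{4}(n+2)^2$ minus a linear term, concatenated with a Dyck remainder of length $n-1$ or $n-2$, of area at most $\tfrac{1}{4}(n-1)^2$; summing, one gets $\mathcal{A}(v)<n^2/2$ for all sufficiently large $n$, hence $t^{\mathcal{A}(v)}<t^{n^2/2}$. There are at most $C_n<4^n$ such path shapes and at most $s^n$ colorings of each, so $\sum_{v\ \mathrm{surface}}t^{\mathcal{A}(v)}<(4s)^n t^{n^2/2}$, whence $\langle\psi|H|\psi\rangle<\frac{2}{1+t^2}(4s)^n t^{n^2/2}$. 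Dividing by $\langle\psi|\psi\rangle\ge t^{n^2}$ yields the claimed bound on $\Delta$.

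I expect the main obstacle to be the intertwining/localization lemma. One must classify how a local Fredkin move can act on the first-return (prime) decomposition of a colored Dyck path — an interior reshuffle leaving all components intact, a front or back shearing of a length-$2$ block, or a merge of two adjacent components — and check in each case that the recolored site is carried to ``the bottom of the longest prime component'' of the image, and that the amplitude ratio $\mathrm{amp}(\uparrow\uparrow\downarrow\text{-side})=t\cdot\mathrm{amp}(\uparrow\downarrow\uparrow\text{-side})$ inherited from the ground state survives the recoloring; this is exactly what forces the energy to concentrate on the surface. The area count for surface walks is a secondary, purely combinatorial point, needed only to extract the precise exponents $(4s)^n$ and $t^{-n^2/2}$.
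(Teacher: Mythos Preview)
Your proposal is correct and follows essentially the same approach as the paper's proof: the same variational state (restrict to $P_{n,>}$, cyclically permute the color of the last down step of the longest prime component), the same observation that $H_\partial$ and $H_X$ annihilate it, and the same numerator/denominator estimate via the area deficit of surface walks versus the peak path of area $n^2$. The one place where you go beyond the paper is in explicitly isolating the ``intertwining lemma'' (that for Fredkin moves staying inside $P_{n,>}$ the recoloring commutes with the move, so only surface terms survive); the paper asserts this in a single sentence, whereas you correctly flag the case analysis---including merges with the longest prime component---that makes it nontrivial. That extra care is justified, since the Fredkin move physically transports the recolored down step when it acts at the right boundary of the longest prime, and one must observe that the color is carried along so that $\mathcal{P}|w\rangle$ and $\mathcal{P}|w'\rangle$ remain Fredkin-related.
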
 

\begin{proof}

We define a new state $|\xi\rangle$ as:
\begin{equation} |\xi\rangle = \frac{1}{\mathcal{N^*}} \sum_{\substack{w \in P_{n,>}} } t^{\frac{1}{2}\mathcal{A}(w)}\ {\cal P}\ |w\rangle ,
\end{equation} 
where $\mathcal{N^*}$ is the new normalization factor and the operator ${\cal P}$ sends the color $c$ of the last down move of the longest prime walk to $c+1 \ mod \ s$ and leaves everything else unchanged. Because of the color imbalance we immediately have: \begin{equation} \langle \xi|GS\rangle =0, \end{equation} 
and $|\xi\rangle$ can be readily used as a variational wave function to bound the gap from above .
%{We can construct a normalized state \begin{equation} |\tilde{\varphi}\rangle \sim |\varphi\rangle - \langle GS|\varphi\rangle \cdot |GS\rangle,\end{equation} such that $\langle \tilde{\varphi}|GS\rangle = 0$ and 

Let us compute the variational energy associated with the $|\xi\rangle$ state. First we note that:
\begin{eqnarray}
H_{\partial}|\xi\rangle=0,~~~~ H_X|\xi\rangle=0,
\end{eqnarray}
 as each non-matching color pair is separated by at least $n$ sites (while $H_X$ is only sensitive to nearest neighbor violations). The same goes for most of the projectors in $H_F$ just the way it works in the ground state. 

%Let us compute the variational energy associated with the $|\xi\rangle$ state:
%\begin{eqnarray} &
%E_{\xi}=\langle \xi|H|\xi \rangle=\langle \xi|H_F|\xi \rangle=\\ \nonumber & \frac{1}{({\cal N}^*)^2}\sum_{w,w'}t^{\frac{1}{2}(\mathcal{A}(w)+\mathcal{A}(w'))}\sum_{j=2}^{2n-1}\sum_{c_1,c_2,c_3}^{s}\langle w'|{\cal P}\big(|\phi^{{c_{1},c_{2},c_{3}}}_{j,A}\rangle\langle\phi^{{c_{1},c_{2},c_{3}}}_{j,A}|+|\phi^{{c_{1},c_{2},c_{3}}}_{j,B}\rangle\langle\phi^{{c_{1},c_{2},c_{3}}}_{j,B}|\big){\cal P}|w \rangle. \label{VariationalEnergy}
%\end{eqnarray}
However, in $H_F$, we have also non-zero contributions coming from walks $w$ that are one "Fredkin" move away from leaving the set $P_n,>$. In other words, this happens when the first (second) projector in $H_F$ in Eq. \eqref{Ham} acts on the left (resp. right) endpoint of the longest prime walk and changes its length from $n+1$ to $n-1$ (the kind of which is absent in the superposition). For instance, applying the projectors on $\phi_{n-1,B}$ (Eq. \eqref{phiB}) to the prime walk $w$ corresponding to the one in Fig.~\ref{excit} gives:

\begin{figure}  \centering \includegraphics[width=0.8\textwidth]{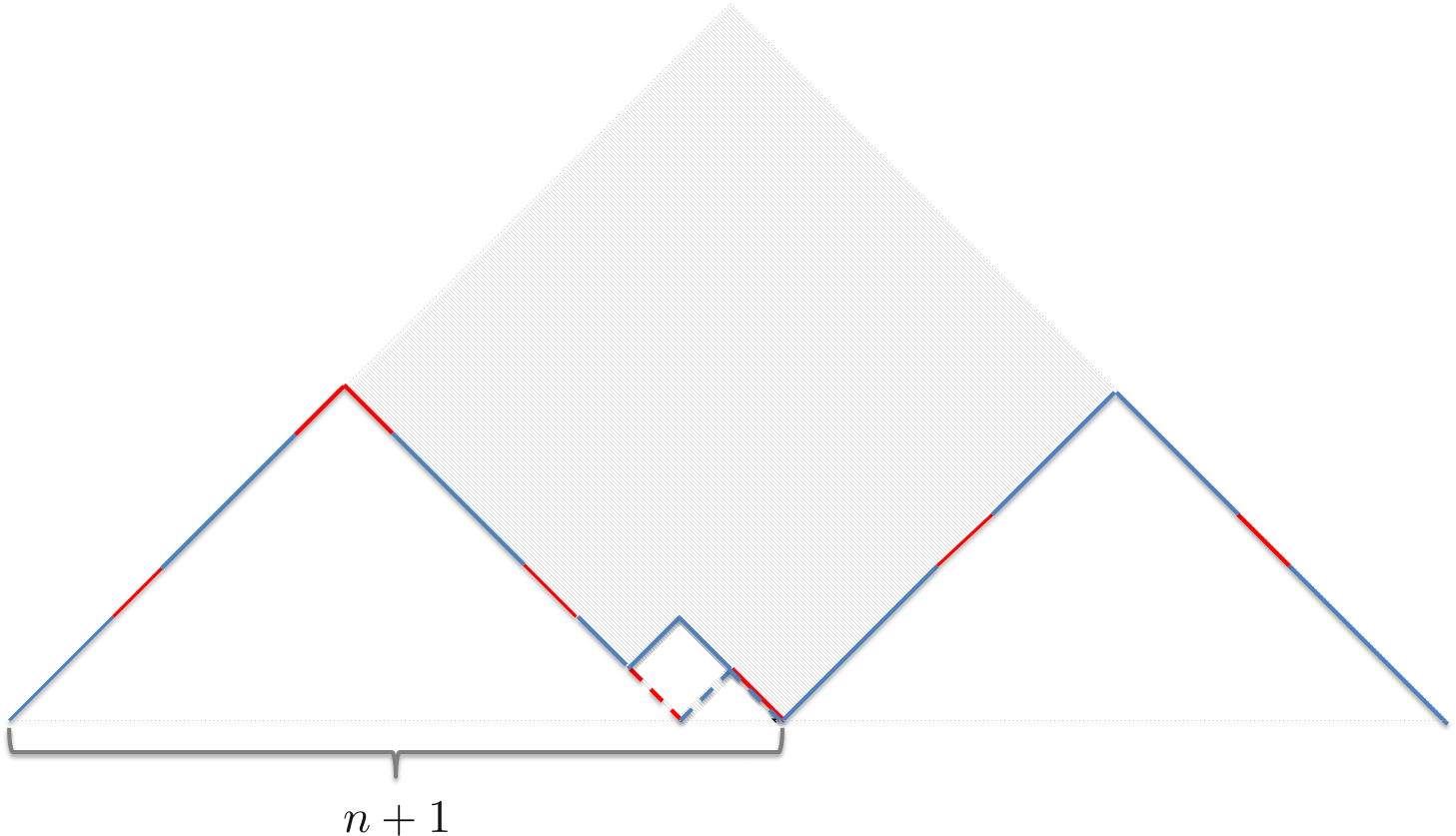} \caption{An representative walk in the superposition of $|\xi\rangle$ that crosses the threshold of the cut-off when acted on by the operator $(t|\downarrow^{r}_{n-1}\uparrow^{b}_{n}\downarrow^{b}_{n+1}\rangle - |\uparrow^{b}_{n-1}\downarrow^{b}_n\downarrow^{r}_{n+1}\rangle)(\langle\downarrow^{r}_{n-1}\uparrow^{b}_{n}\downarrow^{b}_{n+1}|t -  \langle\uparrow^{b}_{n-1}\downarrow^{b}_n\downarrow^r_{n+1}|)$.} \label{excit} \end{figure}

\begin{equation} \frac{1}{1+t^2}\langle w| (t|\downarrow^{r}_{n-1}\uparrow^{b}_{n}\downarrow^{b}_{n+1}\rangle - |\uparrow^{b}_{n-1}\downarrow^{b}_n\downarrow^{r}_{n+1}\rangle)(\langle\downarrow^{r}_{n-1}\uparrow^{b}_{n}\downarrow^{b}_{n+1}|t -  \langle\uparrow^{b}_{n-1}\downarrow^{b}_n\downarrow^r_{n+1}|)|w\rangle = \frac{1}{1+t^2}, \end{equation} and \begin{equation} \frac{1}{1+t^2}\langle w'|(t|\downarrow^{r}_{n-1}\uparrow^{b}_{n}\downarrow^{b}_{n+1}\rangle - |\uparrow^{b}_{n-1}\downarrow^{b}_n\downarrow^{r}_{n+1}\rangle)(\langle\downarrow^{r}_{n-1}\uparrow^{b}_{n}\downarrow^{b}_{n+1}|t -  \langle\uparrow^{b}_{n-1}\downarrow^{b}_n\downarrow^r_{n+1}|)|w\rangle = 0, \end{equation} 
with $w'$ is any other walk in the $|\xi\rangle$ (i.e. any other walk in $P_{n,>}$). 

We can now estimate the variational energy due to such paths. The number of these paths that will go from $P_{n,>}$ to $P_{n,<}$ when applying a Fredkin projector is very roughly bounded from above by $2^{2n} s^n$ (which is the total number of walks). On the other hand, the probability amplitudes of a path that has a prime walk length of exactly $n+1$ or $n+2$ in $P_{n,<}$, are penalized by their area differences from the highest weighted one, i.e. the shaded area in Fig.~\ref{excit}, by a factor smaller than $t^{-n^2/4}$. We therefore have the following upper bound:
\begin{equation} \langle \xi|H|\xi\rangle < \frac{2(4s)^{n}}{1+t^2} t^{-n^2/2}. 
\end{equation} 
Thus we have proved an upper bound of exponential of square of system size on the spectral gap when $t>1,s>1$.
\end{proof}
{\it Remark:} The overall factor $2$ above comes from possibility of modifying the prime path on the left or on the right.

\subsection{Exponential Upper Bound in the $t>1, s=1$ Phase}

As has been discussed in the previous subsection, a bounded from above entanglement entropy is expected to be a strong indicator of the existence of a non-vanishing spectral gap. Yet that intuition fails in the $t>1, s=1$ phase of the Motzkin spin chain. The numerical results in \cite{barbiero2017haldane} showed the $t>1, s=1$ Motzkin chain is gapless despite the boundedness of its entanglement entropy. Here we prove the Fredkin chain counterpart of this phenomenon, which can be readily adapted to the Motzkin chain.

We follow the same strategy we used to construct low energy excitation state from the $t>1, s>1$ phase, only now we don't have the luxury of taking advantage of color degrees of freedom to ensure the orthogonality to the ground state. Fortunately, there's still a degree of freedom we haven't fully exploited yet, namely the z-component of the total spin, or the net up spin of the chain, which can be non-vanishing when the boundary terms in the Hamiltonian is violated. To construct a low energy excitation due to this, we define \begin{equation*} Q_{-2}=\{\text{walks that starts from (0,0) and ends at (2n, -2) and never pass below x=-2.}\} \end{equation*}
%hyper-2 Dyck walks as follows.  
%\Def A hyper-2 Dyck walk (or path) on $2n$ steps is any path from (0,0) to (-2,2n) with steps (1,1) and (1,-1) that never passes below x=-2. 
Notice a Fredkin move acting on a walk in $Q_{-2}$ always gives another walk in $Q_{-2}$.
%\begin{equation} Q_{n,\downarrow} =\text{ \{single walks $|w\rangle$, with $\hat{S}^z |w\rangle = -n$ \}, }\end{equation} where $\hat{S}^z = \sum_{i=1}^{2n} \hat{s}_i^z$, 

\begin{figure} \centering \includegraphics[width=0.8\textwidth]{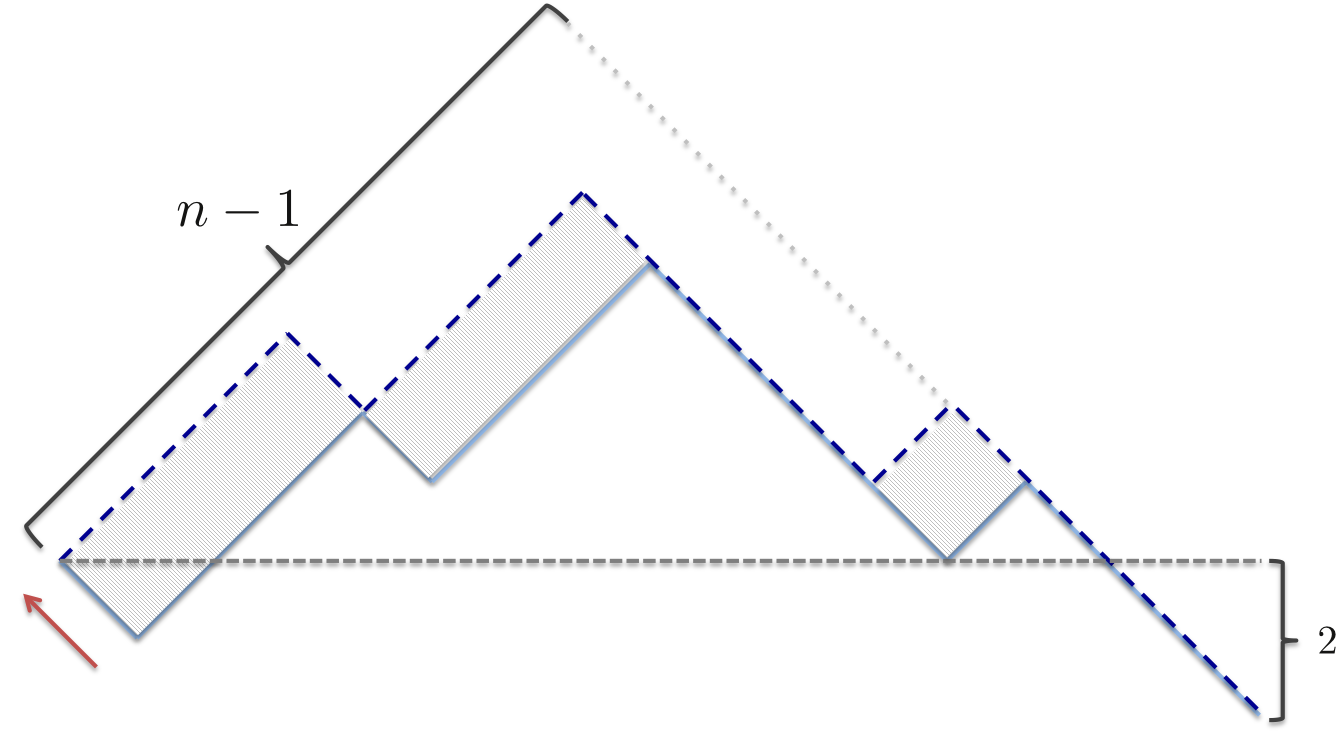} \caption{Two representative walks in $Q_{-2}$. The light blue one can be shifted in the direction of the orange arrow to become the dark blue one with a relative weight increase of $t^{n-1}$ corresponding to the area of the shaded regions.} \label{boundgap} \end{figure}

\begin{Thm} The spectral gap of the $t>1,s=1$ phase has an upper bound of $t^{-n+1}$. \end{Thm}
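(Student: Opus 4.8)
The plan is a variational bound. As trial state I would take the normalized weighted superposition supported on $Q_{-2}$,
\[ |\psi\rangle = \frac{1}{\mathcal N} \sum_{w \in Q_{-2}} t^{\mathcal A(w)/2} \, |w\rangle, \]
with the amplitudes fixed by the same ``flattening''/diamond-counting relation used for the ground state, now applied inside the $Q_{-2}$ sector. The first thing to check is orthogonality to the ground state: $H$ commutes with the total magnetization $\sum_j \sigma^z_j$, whose eigenvalue is $0$ on the Dyck sector containing $|GS\rangle$ but $-2$ on every $w \in Q_{-2}$; hence $\langle\psi|GS\rangle = 0$, and since $E_0 = 0$ and $|GS\rangle$ is the unique ground state the variational principle yields $\Delta \le \langle\psi|H|\psi\rangle$.

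Next I would show that every term of $H$ except the left boundary projector annihilates $|\psi\rangle$. For $s=1$ the operator $H_X$ vanishes identically. Because $Q_{-2}$ is closed under the Fredkin moves (noted just above the theorem) and the amplitudes $t^{\mathcal A(w)/2}$ are precisely the ones that make each Fredkin projector kill any superposition supported on one of its orbits, $H_F|\psi\rangle = 0$ — the way $Q_{-2}$ splits into Fredkin orbits is immaterial, since the same proportionality constant works on every orbit and $H_F$ never maps out of $Q_{-2}$. Every walk in $Q_{-2}$ ends with a down step, so $|\uparrow_{2n}\rangle\langle\uparrow_{2n}|$ annihilates $|\psi\rangle$ as well, and we are left with
\[ \langle\psi|H|\psi\rangle = \langle\psi|\big(|\downarrow_1\rangle\langle\downarrow_1|\big)|\psi\rangle = \frac{\sum_{w\in Q_{-2},\, w_1=\downarrow} t^{\mathcal A(w)}}{\sum_{w\in Q_{-2}} t^{\mathcal A(w)}}. \]

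The combinatorial core is to bound this ratio by $t^{-(n-1)}$. I would decompose each $w\in Q_{-2}$ by its first passages to heights $-1$ and $-2$ as $w = W_0\, d\, W_{-1}\, d'\, W_{-2}$, where $d,d'$ are the single down-steps $0\to-1$ and $-1\to-2$ and $W_0,W_{-1},W_{-2}$ (possibly empty) are Dyck walks riding at levels $0,-1,-2$; here $w_1=\downarrow$ precisely when $W_0$ is empty. On the down-starting walks define the ``lift the blocks up'' map $\Phi(\emptyset,d,W_{-1},d',W_{-2}) = (W_{-1}^{\uparrow},\, d,\, W_{-2}^{\uparrow},\, d',\, \emptyset)$, where $X^{\uparrow}$ denotes $X$ raised by one unit. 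This is a bijection of $\{w:w_1=\downarrow\}$ onto $\{w\in Q_{-2}: W_{-2}(w)=\emptyset\}\subseteq Q_{-2}$, the first-passage decomposition itself providing the inverse, and a direct bookkeeping of areas shows that the weight of $\Phi(w)$ exceeds that of $w$ by at least a factor $t^{n-1}$: the two lifted blocks have total length $2n-2$, and raising a block increases its contribution to the (floored) area exponent. Re-summing the numerator over $\Phi$'s image then gives the ratio $\le t^{-(n-1)}$, hence $\Delta \le t^{-n+1}$. (Tracking the gain carefully in fact improves this to $O(t^{-2(n-1)})$; $t^{-n+1}$ is the safe statement.)

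The step I expect to be most delicate is the last one: verifying that $\Phi$ is genuinely bijective — this is where the uniqueness of the first-passage decomposition must be used, with care over the edge cases $W_{-1}=\emptyset$ or $W_{-2}=\emptyset$ — and pinning down the area gain against whatever convention for $\mathcal A$/diamond-counting is in force so that the resulting factor is truly $\ge t^{n-1}$ rather than, say, $t^{(n-1)/2}$. Everything else — orthogonality, $H_X\equiv 0$, $H_F|\psi\rangle=0$, and the vanishing of the right boundary term — is routine once the closure of $Q_{-2}$ under the Fredkin moves is granted.
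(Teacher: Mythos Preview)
Your proposal is correct and follows essentially the same route as the paper: the same trial state supported on $Q_{-2}$, the same orthogonality via total magnetization, the same reduction to the ratio $\big(\sum_{w:w_1=\downarrow}t^{\mathcal A(w)}\big)/\big(\sum_{w\in Q_{-2}}t^{\mathcal A(w)}\big)$, and an injection into $Q_{-2}$ that raises the area by $2n-2$. The only difference is the choice of injection: the paper simply cycles the first (down) step to the end, $d\,w'\mapsto w'\,d$, which shifts the whole remaining walk up by one unit and yields the same area gain with less bookkeeping than your first-passage ``lift the blocks'' map; your construction is correct but unnecessarily elaborate for this purpose.
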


\begin{proof}
We define an excited state \begin{equation} |\eta \rangle = \frac{1}{N^\star} \sum_{w\in Q_{-2} } t^{\frac{1}{2}A(w)} |w\rangle, \end{equation} where $N^\star=(\sum_{w\in Q_{-2}} t^{A(w)})^{\frac{1}{2}}$ is the normalization factor. $|\eta\rangle$ is clearly orthogonal to the ground state as they have different total spins.
Since $|\eta\rangle$ only violates the boundary term in the Hamiltonian, after being acted on by $H$, only paths starting with a down move will survive. To get an estimate on the amplitude of the paths left, we point out that by rearranging the first down step to the last, (or equivalently shifting along the arrow in Fig.~\ref{boundgap},) we get another walk in $Q_{-2}$of area $2n-2$ bigger. Therefore, 
\begin{equation} \langle \eta|H|\eta\rangle = \frac{\sum_{w\in Q_{-2}}t^{A(w)}}{N^\star} < t^{-n+1}, \end{equation} 
which gives an upper bound on the spectral gap.
\end{proof}
{}
\section{Outlook} \label{Discussion}

We mention a few other future directions worth exploring. While we have shown how to construct a multi-parameter deformation, we have only studied entropy and gap for a uniform parameter $t$.  This choice keeps the chain translationally invariant, however, no momentum space arguments were involved in the analysis. A more general treatment will have to contend with the distribution of the $t_A$ parameters.

Second, the nature of the quantum phase transition is unclear. At a first glance, it hardly fits into the mechanism of symmetry breaking with an associated goldstone mode and exponents. To study the transition, as well as thermal effects, more detailed information about the density of states near the ground state is crucial. In particular, it would be very interesting to explore a possible field theoretic description in the continuous limit.

For the colored case our variational upper bound on the gap gives an elementary way of obtaining the gap behavior established in \cite{levine2016}. In \cite{levine2016} the colorful deformed Motzkin model was studied using more sophisticated mathematical machinery by utilizing the relation between frustration free local Hamiltonians and Markov chains, and applying a Cheeger inequality.

Using a different variational wavefunction, we have also proven that the spectrum is gapless for the colorless version of the model at $t>1$, in spite of entropy being bounded, furnishing an example of how bounded entanglement entropy does not imply a gap. 
The idea can be applied immediately to the deformed Motzkin chain providing an explanation for the surprising numerical observation of a vanishing gap in the $t>1,s=1$ phase \cite{barbiero2017haldane}. 

Finally, we remark that  \cite{barbiero2017haldane} also provides strong numerical evidence supporting the claim that the spectrum will be gapped when $0<t<1$ (for any $s$). It would be interesting to establish this observation rigorously.

{\bf Acknowledgments:} We would like to thank A. Ahamadain, R. Movassagh, V. Korepin and  H. Katsura for discussions.  The work was supported in part by the NSF grant DMR-1508245.

%\bibliographystyle{unsrt}
%\bibliography{Motzkin}

\end{document}